\newtheorem{thm}{Theorem}
\newtheorem{lemma}{Lemma}
\newtheorem{definition}{Definition}
\newtheorem{assumption}{Assumption}
\newtheorem{remark}{Remark}
\newcommand\tx{\tilde{x}}
\newcommand\tu{\tilde{u}}
\newcommand\bj{\mathbf{J}}
\newcommand\re{\mathbb{R}}
\newcommand\cc{\mathcal{C}}
\newcommand\cf{\mathcal{F}}
\newcommand\tcf{\tilde{\mathcal{F}}}
\newcommand\ck{\mathcal{K}}
\newcommand\cki{\mathcal{K}_\infty^{-1}}
\newcommand\ckf{\mathcal{K}_\infty^{\text{FxT}}}
\DeclarePairedDelimiter\sg{\lfloor}{\rceil}
\newcommand\ckp{\mathcal{K}_\infty^+}
\newcommand\klfx{\mathcal{KL}^{\text{FXT}}}
\newcommand\lmax{\overline{\lambda}}
\newcommand\lmin{\underline{\lambda}}
\newcommand\smax{\overline{\mu}}
\newcommand*{\QEDB}{\hfill\ensuremath{\square}}%
\def\BibTeX{{\rm B\kern-.05em{\sc i\kern-.025em b}\kern-.08em
    T\kern-.1667em\lower.7ex\hbox{E}\kern-.125emX}}
\begin{document}
\title{A Lyapunov-Based Small-Gain Theorem for Fixed-Time Stability}

\author{Michael Tang, Miroslav Krstic, Jorge Poveda
\thanks{M. Tang, J. I. Poveda are with the Dep. of Electrical and Computer Engineering, University of California, San Diego, La Jolla, CA, USA.}
\thanks{ M. Krsti\'c is with the Dep. of Mechanical and Aerospace Engineering, University of California, San Diego, La Jolla, CA, USA.}
\thanks{This work was supported in part by NSF grants ECCS CAREER 2305756, CMMI 2228791, and AFOSR FA9550-22-1-0211.}
\vspace{-0.6cm}
}
\maketitle

\begin{abstract}
This paper introduces a novel \emph{Lyapunov-based small-gain methodology} for establishing fixed-time stability (FxTS) guarantees in interconnected dynamical systems. Specifically, we consider interconnections in which each subsystem admits an individual fixed-time input-to-state stability (ISS) Lyapunov function that certifies FxT-ISS. We then show that if a nonlinear small-gain condition is satisfied, then the entire interconnected system is FxTS. Our results are analogous to existing Lyapunov-based small-gain theorems developed for asymptotic and finite-time stability, thereby filling an important gap in the stability analysis of interconnected dynamical systems. The proposed theoretical tools are further illustrated through analytical and numerical examples, including the first result on fixed-time feedback optimization of dynamical systems without time-scale separation between the plant and the controller.
\end{abstract}
\section{Introduction}
To address the growing demand for fast control and learning algorithms with greater accuracy and stronger robustness guarantees, significant research efforts have focused on developing tools that enable improved convergence properties. Finite-time stability (FTS) \cite{bhatfinite} and control \cite{zpjfinite} have attracted attention from the nonlinear controls community due to improved convergence and disturbance rejection properties. However, a drawback of finite-time stabilization is that the time by which the system is guaranteed to reach its equilibrium (i.e., the settling time) may grow unbounded with respect to the system's initialization. This motivates the notion of fixed-time stability (FxTS), which distinguishes itself from FTS by further guaranteeing a uniform upper bound on the settling time. Following the introduction of Lyapunov conditions in \cite{6104367} to verify FxTS for continuous-time autonomous systems, FxTS has received significant attention due to its ability to address challenges in a variety of engineering domains, such as optimization \cite{10885949}, learning \cite{9760031}, neural networks \cite{CHEN2020412}, etc. Despite the numerous advancements in the theory of FxTS, there are limited tools available for the study of FxTS in large scale systems. For example, current methods require the systems to satisfy restrictive structural requirements \cite{9714166}, certain interconnection conditions under a time-scale separation \cite{11239427}, or trajectory-based small gain conditions \cite{10886634}. 

On the other hand, the small-gain theorem is a fundamental tool for the stability analysis of interconnected systems. Originally developed for systems with linear gains \cite{1098316}, the small-gain theorem has been extended to those with nonlinear gains \cite{jiang1994small} after the development of input-to-state stability (ISS) theory \cite{sontag2008input}. In the past few decades, small-gain theory has been extended to hybrid systems \cite{bao2018ios,liberzon2012small}, stochastic systems \cite{dragan1997small}, discrete-time systems \cite{jiang2004nonlinear,zhongping2008nonlinear}, and infinite dimensional systems \cite{mironchenko2021nonlinear, mironchenko2021small}, to name a few.
Moreover, to harness the power, applicability, and constructiveness of Lyapunov-based methods, considerable work has also been done to reformulate nonlinear small-gain theory using Lyapunov functions \cite{jiang1996lyapunov,liberzon2014lyapunov,kawan2020lyapunov}. While the literature on nonlinear small-gain theory is indeed vast, extensions to systems with accelerated convergence are relatively recent and, for the most part, underdeveloped. For example, trajectory-based small-gain conditions were derived in \cite{zpjfinite} for interconnected finite-time input-to-state stable systems, and a Lyapunov-based reformulation for networks was proposed in \cite{sgfinite}. Following the development of the concept of fixed-time input-to-state stability (FxT ISS) \cite{LOPEZRAMIREZ2020104775}, a trajectory-based small-gain theorem was developed for interconnected FxT ISS systems in \cite{10886634}. However, as acknowledged by the authors, the results are quite restrictive in that they contain additional conditions that depend on the settling-time function, its inverse, and several auxiliary functions that are used to upper bound the trajectories of the subsystems.

The main contribution of this paper is the introduction of a novel Lyapunov-based small-gain theorem to establish FxTS for interconnected FxT ISS subsystems. Specifically, we show that if each of the subsystems admit a FxT ISS Lyapunov function, and the gain functions ---under a mild structural requirement--- satisfy the nonlinear small-gain condition, then the overall interconnected system is FxTS. In this sense, our main result fills an important gap in the stability of nonlinear systems and mirrors similar Lyapunov-based small gain theorems developed in the literature for the study of asymptotic stability and finite-time stability. However, as described in the paper, our results do not follow as simple extensions of previous results, but rather require the derivation of several new technical lemmas and results, which are presented in the Appendix for the sake of completeness.

After introducing the main analytical results, we present two examples to illustrate our main results. In the first example we consider the interconnection of two systems with with homogeneous coupling terms. In the second example we consider the problem of fixed-time feedback optimization in dynamical systems \cite{9075378,8673636,bianchin2022online} without timescale separation, a problem that had not been addressed before, and establish FxTS using our small-gain theorem.

The rest of this paper is organized as follows: Section \ref{sec_preliminaries} presents some mathematical preliminaries. Section \ref{sec_main} presents the main results. Section \ref{sec_fixedtime} presents the application to feedback optimization, and Section \ref{sec_conclusions} presents the conclusions. 

\section{Preliminaries}
\label{sec_preliminaries}
\subsection{Notation}
We use $\re_{\ge 0}$ to denote the set of nonnegative real numbers. A continuous function $\alpha:\re_{\ge 0}\to\re_{\ge 0}$ is said to be of class $\ck$, denoted $\alpha\in\ck$, if $\alpha(0)=0$ and $\alpha$ is strictly increasing. We use $\ck_\infty$ to denote the set of $\alpha\in\ck$ that satisfy $\lim_{s\to\infty}\alpha(s)=\infty$. Given $\alpha\in\ck_\infty$, if there exists $n\in\mathbb{N}$ and constants $c_i>0$, $p_i>0$ for $i=1,...,n$ such that $\alpha(s)=\sum_{i=1}^n c_i s^{p_i}$ for $s\ge 0$, we say $\alpha$ is of class $\ckp$, denoted $\alpha\in\ckp$. Moreover, if $n=2$  with $p_1\in (0,1)$ and $p_2>1$, we say $\alpha\in\ckf$. Given $\alpha\in\ck_\infty$, we use $\alpha^{-1}$ to denote its inverse. We define $\cki$ to be the class of functions $\alpha\in\ck_\infty$ that satisfy $\alpha^{-1}\in\ckp$. A continuous function $\beta:\re_{\ge 0}\times\re_{\ge 0}\to\re_{\ge 0}$ is said to be of class $\klfx$, denoted $\beta\in\klfx$, if $\beta(\cdot, 0)\in\mathcal{K}$ and for each fixed $r\ge 0$, $\beta(r, \cdot)$ is continuous, non-increasing and there exists a function $T:\mathbb{R}_{\ge 0}\to\mathbb{R}_{\ge 0}$ such that $\beta(r, t)=0$ for all $t\ge T(r)$. The mapping $T$ is called the \emph{settling time function}. Given a measurable function $u:\mathbb{R}_{\ge 0}\to\mathbb{R}^m$ we denote $|u|_\infty=\text{ess}\sup_{t\ge 0}|u(t)|$, where $|\cdot|$ represents the Euclidean norm. We use $\mathcal{L}_\infty^m$ to denote the set of measurable functions $u:\mathbb{R}_{\ge 0}\to\mathbb{R}^m$ satisfying $|u|_\infty<\infty$. For a function $f:(a,b)\to\re$, the upper right Dini derivative at $t\in (a,b)$ is defined as $D^+f(t)=\limsup_{h\to 0^+}\frac{f(t+h)-f(t)}{h}$. Given a differentiable function $f:\mathbb{R}^n\to\mathbb{R}^m$, we use $\bj_f(x)\in\mathbb{R}^{m\times n}$ to denote the Jacobian of $f$ evaluated at $x\in\mathbb{R}^n$. If $m=1$, we use $\nabla f(x)=\bj_f(x)^\top$. If $\bj_f(x)$ is continuous, we say $f$ is $\mathcal{C}^1$. We use $I_n\in\re^{n\times n}$ to denote the $n$-by-$n$ identity matrix. If $Q\in\re^{n\times n}$ is symmetric, we use $\lmax(Q)$ and $\lmin(Q)$ to denote its largest and smallest eigenvalue, respectively. If $Q\in\re^{n\times m}$, we use $\smax(Q)$ to denote its largest singular value. We present a simple lemma that will be instrumental for our results
\begin{lemma}\label{lem_sandw}
    Given $\underline{p}\le p\le \overline{p}$, the following holds
    \begin{equation*}
        x^p\le x^{\underline{p}}+x^{\overline{p}},
    \end{equation*}
    for all $x\ge 0$.\QEDB
\end{lemma}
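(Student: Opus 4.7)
The plan is to reduce the inequality to the monotonicity of the map $p\mapsto x^p$ in the exponent, which changes direction depending on whether $x\le 1$ or $x\ge 1$. Concretely, I would split the nonnegative real line into the two regions $[0,1]$ and $[1,\infty)$ and argue separately on each.

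First, for $x\in[0,1]$, the logarithm satisfies $\ln x \le 0$, so the map $p\mapsto x^p=e^{p\ln x}$ is non-increasing in $p$. Since $\underline{p}\le p$, this gives $x^p\le x^{\underline{p}}$, and because $x^{\overline{p}}\ge 0$ we conclude $x^p\le x^{\underline{p}}+x^{\overline{p}}$. The boundary value $x=0$ is handled in this case (with the convention $0^q=0$ for $q>0$, and trivially if $p=0$).

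Second, for $x\ge 1$, one has $\ln x\ge 0$, so $p\mapsto x^p$ is non-decreasing. Using $p\le \overline{p}$ yields $x^p\le x^{\overline{p}}$, and adding $x^{\underline{p}}\ge 0$ on the right completes the argument in this case. Combining the two cases proves the lemma for all $x\ge 0$.

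I do not anticipate a genuine obstacle here; the only small subtlety is to observe that the direction of monotonicity of $x^p$ in $p$ flips at $x=1$, which is precisely why the bound needs both a sub-unit term $x^{\underline{p}}$ and a super-unit term $x^{\overline{p}}$ on the right-hand side. The argument extends without modification to any finite collection of exponents sandwiching $p$, but the two-term statement suffices for its intended use later in the paper.
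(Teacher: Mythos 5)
Your proof is correct and follows essentially the same route as the paper's: split at $x=1$, use $x^p\le x^{\underline{p}}$ on $[0,1]$ and $x^p\le x^{\overline{p}}$ on $[1,\infty)$, then add the other nonnegative term. You simply spell out the monotonicity of $p\mapsto x^p$ via the sign of $\ln x$, which the paper leaves implicit.
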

\begin{proof}
    For $x\in [0,1]$ we have $x^p\le x^{\underline{p}}$, and for $x\ge 1$ we have $x^p\le x^{\overline{p}}$. We combine both cases to establish the result.
\end{proof}
\subsection{Fixed-Time stability}
Consider the following class of autonomous systems
\begin{equation}\label{sys}
    \dot{x}=f(x),\quad x(0)=x_0,
\end{equation}
where $x\in\re^n$ is the state and $f:\re^n\to\re^n$ is a continuous vector field with $f(0)=0$. We will state some definitions that are inspired by the results of \cite{6104367}.
\begin{definition}
    The origin of \eqref{sys} is said to be \emph{uniformly globally fixed-time stable (FxTS)} if there exists $\beta\in\klfx$ such that for each $x_0\in\re^n$, every solution of \eqref{sys} exists for $t\ge 0$ and satisfies $|x(t)|\le \beta(|x_0|, t)$
    for all $t\ge 0$,
    where the settling time function $T$ of $\beta$ is uniformly bounded with $T(0)=0$.\QEDB
\end{definition}
\vspace{0.1cm}
\begin{definition}\label{def_fxts_lf}
    A continuous function $V:\re^n\to\re_{\ge 0}$ is said to be a FxTS Lyapunov function for \eqref{sys} if:
    \begin{enumerate}
        \item There exists $\underline{\alpha}, \overline{\alpha}\in\ck_\infty$ such that
        \begin{equation}\label{sandw}
            \underline{\alpha}(|x|)\le V(x)\le \overline{\alpha}(|x|),
        \end{equation}
        for all $x\in\re^n$.
        \item There exists $\Psi\in\ckf$ such that the following holds
    \begin{equation}\label{fxts_lyapunov_bd}
       D^+ V(x(t))\le -\Psi(V(x(t))),
    \end{equation}
    along the trajectories of the system \eqref{sys} for all $t\ge 0$.\QEDB
    \end{enumerate}
\end{definition}
\vspace{0.1cm}
It was shown in \cite{6104367} that the origin of \eqref{sys} is FxTS if it admits a FxTS Lyapunov function. Moreover, if $\Psi$ in \eqref{fxts_lyapunov_bd} is given by $\Psi(s)=as^p+bs^q$ with $a,b>0, p\in (0,1), q>1$, then the settling time of $\beta$ satisfies the following bound for all $x_0\in \re^n$:
\begin{equation*}
    T(x_0)\le \frac{1}{a(1-p)}+\frac{1}{b(q-1)}.
\end{equation*}
\subsection{Fixed-time input-to-state stability}
Consider the following class of dynamical systems that also depend on an input:
\begin{equation}\label{sysu}
    \dot{x}=f(x,u),\quad x(0)=x_0,
\end{equation}
where $x\in\re^n$ is the state, $u\in\mathcal{L}_\infty^m$ is the input, and $f:\re^n\times\re^m\to\re^n$ is a continuous, non-Lipschitz vector field that satisfies $f(0,0)=0$. We will state some definitions and results from \cite{fxtiss_converse} that will be particularly useful for our work.
\begin{definition}
    The system \eqref{sysu} is said to be \emph{fixed-time input-to-state stable (FxT ISS)} if there exists $\beta\in\klfx$ and $\varrho\in\mathcal{K}$ such that for each $x_0\in\mathbb{R}^n$ and $u\in\mathcal{L}_\infty^m$, every solution $x(t)$ of \eqref{sysu} exists for $t\ge 0$ and satisfies
     \begin{equation*}
         |x(t)|\le \beta(|x_0|, t)+\varrho(|u|_\infty),
     \end{equation*}
     for all $t\ge 0,$
     where the settling time function $T$ of $\beta$ is continuous and uniformly bounded, with $T(0)=0$.\QEDB
\end{definition}
\vspace{0.1cm}
\begin{definition}
    A $\mathcal{C}^1$ function $V:\re^n\to\re_{\ge 0}$ is said to be a FxT ISS Lyapunov function for \eqref{sysu} if it satisfies item 1 of Definition \ref{def_fxts_lf} and there exists $\chi\in\ck_\infty$ and $\Psi\in\ckf$ such that the following holds
    \begin{equation}\label{fxtiss_imp}
        V(x)\ge \chi(|u|) \Rightarrow \nabla V(x)^\top f(x,u)\le -\Psi(V(x)),
    \end{equation}
    for all $x\in\re^n$ and $u\in\re^m$.\QEDB
\end{definition}
\vspace{0.1cm}
It is shown in \cite{fxtiss_converse} that if \eqref{sysu} admits a FxT ISS Lyapunov function, it is FxT ISS. Moreover, it  directly follows from the definition that if \eqref{sysu} is FxT ISS, then the origin of system \eqref{sysu} with $u(t)\equiv 0$ is FxTS.

\section{Main Results}\label{sec_main}
\subsection{Analysis}
We consider interconnections of the following form
\begin{subequations}\label{sysnu}
    \begin{align}
        \dot{x}_1&=f_1(x_1,x_2)\label{sysnu1}\\
        \dot{x}_2&=f_2(x_1,x_2),\label{sysnu2}
    \end{align}
\end{subequations}
where $x_i\in\re^{n_i}$ are the states and $f_i:\re^{n_1}\times\re^{n_2}\to\re^{n_i}$ are continuous vector fields with $f_1(0,0)=f_2(0,0)=0$, see Figure \ref{block}. Our primary goal is to derive lower order conditions on the subsystems \eqref{sysnu1} and \eqref{sysnu2} that allow us to establish FxTS of the system \eqref{sysnu}. To address this, we propose a Lyapunov-based small gain approach and make the following assumption on \eqref{sysnu}:
\begin{assumption}\label{assump_sysnu}
    Consider the system \eqref{sysnu} and, for each $i=1,2$, with $j=3-i$, there exists $\mathcal{C}^1$ functions $V_i:\re^{n_i}\to\re_+$ satisfying the following conditions:
    \begin{enumerate}
        \item There exists $\underline{\alpha}_i, \overline{\alpha}_i\in\ck_\infty$ such that
        \begin{equation}\label{assump_sandw}
            \underline{\alpha}_i(|x_i|)\le V(x_i)\le \overline{\alpha}_i(|x_i|),
        \end{equation}
        for all $x_i\in\re^{n_i}$.
        \item There exists $\gamma_{i}\in\ck_\infty$ and $\Psi_i\in\ckf$ such that the following holds
    \begin{align}\label{fxtiss_imp_assump}
        &V_i(x_i)\ge \gamma_{i}(V_{j}(x_{j}))\notag \\&\Rightarrow \nabla V_i(x_i)^\top f_i(x_1, x_2)\le -\Psi_i(V_i(x_i)),
    \end{align}
    for all $x\in \re^{n_1+n_2}$. \QEDB
    \end{enumerate}
\end{assumption}
\vspace{0.1cm}
In other words, we simply require that each $x_i$ subsystem is FxT ISS with respect to ``input" $x_j$, and that there exists a FxT ISS Lyapunov function to establish this property. Note that although \eqref{fxtiss_imp_assump} is not exactly in the form \eqref{fxtiss_imp}, it can be placed into the form \eqref{fxtiss_imp} by leveraging \eqref{assump_sandw}. We are now ready to state the first main result of the paper:

\vspace{0.1cm}
\begin{thm}\label{thm_sgt}
    Let system \eqref{sysnu} satisfy Assumption \ref{assump_sysnu} and suppose, for each $i=1,2$, there exists $\hat{\gamma}_{i}\in\ckp\cup\cki$ such that $\hat{\gamma}_{i}(s)>\gamma_{i}(s)$ and the following small-gain condition holds:
    \begin{equation}
    \hat{\gamma}_{1}\circ\hat{\gamma}_{2}(s)<s,
    \end{equation}
    for all $s>0$.
    Then, the origin of the system \eqref{sysnu} is FxTS.\QEDB
\end{thm}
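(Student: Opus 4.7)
The plan is to construct a composite Lyapunov function for \eqref{sysnu} and verify that it fulfills the requirements of Definition \ref{def_fxts_lf}, so that the FxTS Lyapunov theorem of \cite{6104367} applies directly. My approach follows the spirit of the classical max-of-scaled-Lyapunov-functions construction used for asymptotic ISS small-gain arguments, but the main difficulty---and what distinguishes it from a routine adaptation---will be preserving the $\ckf$ form of the dissipation rate under the rescaling that the small-gain construction requires.

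First, I would use the condition $\hat{\gamma}_1\circ\hat{\gamma}_2(s)<s$ to produce an auxiliary gain $\sigma\in\ck_\infty$ strictly sandwiched between $\hat{\gamma}_1$ and $\hat{\gamma}_2^{-1}$, i.e., $\hat{\gamma}_1(s)<\sigma(s)<\hat{\gamma}_2^{-1}(s)$ for all $s>0$. The assumption $\hat{\gamma}_i\in\ckp\cup\cki$ is essential at this point: it guarantees that $\sigma$ can be chosen with polynomial (or inverse-polynomial) structure, which is what makes the subsequent change-of-variable argument tractable; in particular, one can often sandwich with a single monomial $\sigma(s)=c s^r$. I would then define
\begin{equation*}
    V(x)=\max\bigl\{V_1(x_1),\,\sigma(V_2(x_2))\bigr\},
\end{equation*}
and verify the sandwich bound \eqref{sandw} immediately from \eqref{assump_sandw} combined with $\sigma\in\ck_\infty$.

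For the decrease condition, I would split on which argument of the max is active. When $V_1(x_1)>\sigma(V_2(x_2))$, the chain $V_1>\sigma(V_2)>\hat{\gamma}_1(V_2)>\gamma_1(V_2)$ activates the FxT ISS implication \eqref{fxtiss_imp_assump} for $i=1$, yielding $D^+V=\nabla V_1(x_1)^\top f_1(x_1,x_2)\le -\Psi_1(V_1)=-\Psi_1(V)$. When $\sigma(V_2(x_2))>V_1(x_1)$, the symmetric chain $V_2>\sigma^{-1}(V_1)>\hat{\gamma}_2(V_1)>\gamma_2(V_1)$ activates the implication for $i=2$, giving $D^+V\le -\sigma'(V_2)\Psi_2(V_2)$. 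Points at which the two arguments coincide are handled by the standard inequality $D^+\max_i g_i\le\max_i D^+g_i$.

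The main obstacle is the last step: showing that the bound $-\sigma'(V_2)\Psi_2(V_2)$, re-expressed in the composite variable $V=\sigma(V_2)$, is dominated by $-\tilde{\Psi}(V)$ for some $\tilde{\Psi}\in\ckf$. Because $\Psi_2(s)=a_2s^{p_2}+b_2s^{q_2}$ with $p_2\in(0,1)$ and $q_2>1$, and because the polynomial structure of $\sigma\in\ckp\cup\cki$ makes $\sigma'(V_2)\Psi_2(V_2)$ a finite sum of powers of $V_2$, the change of variable $V_2=\sigma^{-1}(V)$ converts it to a finite sum of powers of $V$ containing at least one exponent in $(0,1)$ and at least one exponent exceeding $1$. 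Applying Lemma \ref{lem_sandw} to retain only the smallest and largest exponents yields the required $\ckf$ lower bound, and a second application of the sandwich lemma turns $\min\{\Psi_1(V),\tilde{\Psi}(V)\}$ into a single $\ckf$ expression $\Psi(V)$. Therefore $D^+V(x(t))\le-\Psi(V(x(t)))$ along trajectories, and Definition \ref{def_fxts_lf} together with \cite{6104367} delivers FxTS of the origin of \eqref{sysnu}.
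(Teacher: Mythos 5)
Your construction is genuinely different from the paper's and, as written, has a gap at the point you yourself identify as the crux. The paper builds the four-term max
\begin{equation*}
    V=\max\bigl\{\sigma_\lambda(V_1),\,\sigma_\lambda(\hat{\gamma}_1(V_2)),\,\sigma_\lambda(V_2),\,\sigma_\lambda(\hat{\gamma}_2(V_1))\bigr\},
\end{equation*}
where $\sigma_\lambda(s)=s^\lambda$ and $\lambda\ge1$ is chosen large; this uniform power rescaling is not cosmetic. You instead propose a two-term max $V=\max\{V_1,\sigma(V_2)\}$ with a single sandwiching gain $\sigma$. The sandwich itself is not the problem: $\sigma=\hat{\gamma}_2^{-1}$ already satisfies $\hat{\gamma}_1<\sigma\le\hat{\gamma}_2^{-1}$ (by the small-gain condition), lies in $\ckp\cup\cki$, and makes your two case-chains go through, so no separate existence argument is needed. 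The real gap is that $\sigma(V_2(\cdot))$ is used as a FxT ISS Lyapunov function for the $x_2$ subsystem, and this is not automatic.

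Two things go wrong without the extra power rescaling. First, a FxT ISS Lyapunov function must be $\mathcal{C}^1$ (Definition 5), but if the smallest exponent $r_1$ of $\sigma\in\ckp$ satisfies $r_1<1$ (or if $\sigma\in\cki$ with the analogous small-exponent behavior near the origin), then $\sigma$ is not differentiable at $0$ and $\sigma\circ V_2$ fails the $\mathcal{C}^1$ requirement. Lemma~\ref{lem_k1fxt} explicitly demands $r_1\ge1$ and Lemma~\ref{lem_invfxt} composes with $\gamma^\lambda$ and takes $\lambda$ large precisely to repair this; your $\sigma=\hat{\gamma}_2^{-1}$ has no reason to satisfy either hypothesis. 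Second, your assertion that the change of variables $V_2=\sigma^{-1}(V)$ sends $\sigma'(V_2)\Psi_2(V_2)$ to a sum of powers of $V$ with ``at least one exponent in $(0,1)$'' is false in general: with $\sigma(s)\sim c s^{r_1}$ near $0$, the smallest resulting exponent is $1-(1-p_2)/r_1$, which is nonpositive whenever $r_1\le 1-p_2$. (Negative exponents do not actually destroy the $\ckf$ lower bound, but they do destroy the mechanical ``keep the smallest and largest exponents and apply Lemma~\ref{lem_sandw}'' argument you describe, and they signal that $\sigma\circ V_2$ is not the differentiable Lyapunov function you need.) Repairing your proof requires exactly the $\sigma_\lambda$-composition, i.e.\ working with $\max\{\sigma_\lambda(V_1),\sigma_\lambda(\sigma(V_2))\}$, after which you are invoking the same technical Lemmas~\ref{lem_k1scale}--\ref{lem_invfxt} that the paper's argument rests on. What your construction buys, if fixed this way, is a slightly leaner max (two terms instead of four) at the price of committing to one particular $\sigma$; the paper's four-term form keeps both $\hat{\gamma}_1$ and $\hat{\gamma}_2$ in play symmetrically and discharges the small-gain condition case by case without ever naming a sandwich function.
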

\begin{figure}[t!]
  \centering \includegraphics[width=0.35\textwidth]{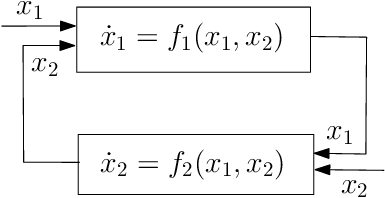}
    \caption{A block diagram depicting the interconnection \eqref{sysnu}.} \label{block}
    \vspace{-0.4cm}
\end{figure}

\vspace{0.1cm}
\begin{proof}
    Let $\sigma_\lambda\in\ck_\infty^+$ take the form $\sigma_\lambda(s)=s^\lambda$ for $\lambda\ge 1$, and consider the FxTS Lyapunov function candidate
    \begin{align}
        V(x)&=\max\{\sigma_\lambda\circ V_1(x_1), \sigma_\lambda\circ\hat{\gamma}_{1}\circ V_2(x_2), \sigma_\lambda\circ V_2(x_2),\notag\\&~~~ \sigma_\lambda\circ\hat{\gamma}_{2}\circ V_1(x_1)\},\label{lfmax}
    \end{align}
    where $\lambda$ is a sufficiently large constant such that $\sigma_\lambda\circ\hat{\gamma}_{2}\circ V_1(x_1)$ and $\sigma_\lambda\circ\hat{\gamma}_{1}\circ V_2(x_2)$ are FxT ISS Lyapunov functions of the \eqref{sysnu1} and \eqref{sysnu2} subsystems, respectively. As shown in Lemmas \ref{lem_k1fxt} and \ref{lem_invfxt}, such a $\lambda\ge 1$ always exists. Moreover, since $\lambda\ge 1$, it follows from Lemma \ref{lem_k1fxt} that $\sigma_\lambda\circ V_1(x_1)$ and $\sigma_\lambda\circ V_2(x_2)$ are also FxT ISS Lyapunov functions of the \eqref{sysnu1} and \eqref{sysnu2} subsystems, respectively. Then, for $i,j\in\{1,2\}$, we define
    \begin{subequations}\label{sigv}
    \begin{align}
        \sigma_{i1}(s)&:=\sigma_\lambda(s)\\
        \sigma_{i2}(s)&:=\sigma_\lambda\circ\hat{\gamma}_{3-i}(s)\\
        V_{ij}(x_i)&:=\sigma_{ij}\circ V_i(x_i).
    \end{align}
    \end{subequations}
    Let $x(0)\in\mathbb{R}^n$ and let $x(\cdot)$ be a maximal solution to system \eqref{sysnu} from $x(0)$, defined for all $t\in[0,T_\text{max})$, with $T_\text{max}\in(0,\infty]$. Each $V_{ij}(x_i)$ is a FxT ISS Lyapunov function for the $x_i$ subsystem and, hence, satisfies the following properties for each $i,j\in \{1,2\}$:
    \begin{enumerate}
        \item The inequalities
        \begin{equation*}
            \underline{\alpha}_{ij}(|x_i|)\le V_{ij}(x_i)\le \overline{\alpha}_{ij}(|x_i|),
        \end{equation*}
        hold for all $x_i\in\re^{n_i}$, where $\underline{\alpha}_{ij}=\sigma_{ij}\circ\underline{\alpha}_i$ and $\overline{\alpha}_{ij}=\sigma_{ij}\circ\overline{\alpha}_i$.
        \item There exists $\Psi_{ij}\in\ckf$ such that:
        \begin{align}\label{subiss}
            &V_{ij}(x_i)\ge \gamma_{ij}\circ V_{3-i}(x_{3-i})\notag\\&\Rightarrow D^+ V_{ij}(x_i)\le -\Psi_{ij}(V_{ij}(x_i)),
        \end{align} 
        holds for all $0\le t<T_\text{max}$ and $x\in\re^{n_1+n_2}$, where $\gamma_{ij}=\sigma_{ij}\circ\hat{\gamma}_i$.
    \end{enumerate}
    Next, define
    \begin{align*}
        \underline{\alpha}(s)&=\min_{i,j=1,2}\underline{\alpha}_{ij}\left(\frac{s}{\sqrt{2}}\right),\quad
        \overline{\alpha}(s)=\max_{i,j=1,2}\overline{\alpha}_{ij}(s).
    \end{align*}
    It is easy to verify that $V(x)=\max_{i,j=1,2} V_{ij}(x_i)$ satisfies \eqref{sandw} with $\underline{\alpha}, \overline{\alpha}\in\ck_\infty$. Now, consider some $t\in [0, T_\text{max})$ where $V(x)=\sigma_{ij}\circ V_i(x_i)$, which leads to four cases:
    \begin{enumerate}
        \item If $i=1$ and $j=1$, we have that $\sigma_{11}\circ V_1(x_1)\ge \sigma_{22}\circ V_2(x_2)$. This implies $V_1(x_1)\ge \hat{\gamma}_1\circ V_2(x_2)$, and thus $V_{11}(x_1)\ge \gamma_{11}\circ V_2(x_2)$. By \eqref{subiss}, we have $\nabla V_{11}(x_1)^\top f_1(x_1, x_2)\le -\Psi_{11}(V_{11}(x_1))$
        \item If $i=1$ and $j=2$, we have $\sigma_{12}\circ V_1(x_1)\ge \sigma_{21}\circ V_2(x_2)$, which implies $\hat{\gamma}_2\circ V_1(x_1)\ge V_2(x_2)$. By the small gain condition, we also have $V_1(x_1)\ge \hat{\gamma}_1\circ V_2(x_2)$, which implies $V_{12}(x_1)\ge \gamma_{12}\circ V_2(x_2)$. Hence, by \eqref{subiss}, it follows that $\nabla V_{12}(x_1)^\top f_1(x_1, x_2)\le -\Psi_{12}(V_{12}(x_1))$.
        \item If $i=2$ and $j=1$, we have $\sigma_{21}\circ V_2(x_2)\ge \sigma_{12}\circ V_1(x_1)$, which implies $V_2(x_2)\ge \hat{\gamma}_2\circ V_1(x_1)$, and thus $V_{21}(x_2)\ge \gamma_{21}\circ V_1(x_1)$. By \eqref{subiss}, we have $\nabla V_{21}(x_2)^\top f_2(x_1, x_2)\le -\Psi_{21}(V_{21}(x_2))$.
        \item If $i=2$ and $j=2$, we have $\sigma_{22}\circ V_2(x_2)\ge \sigma_{11}\circ V_1(x_1)$, which implies $\hat{\gamma}_1\circ V_2(x_2)\ge V_1(x_1)$. By the small gain condition, we have $V_2(x_2)\ge \hat{\gamma}_2\circ V_1(x_1)$, which implies $V_{22}(x_2)\ge \gamma_{22}\circ V_1(x_1)$. Hence, \eqref{subiss} yields $\nabla V_{22}(x_2)^\top f_2(x_1, x_2)\le -\Psi_{22}(V_{22}(x_2))$.
    \end{enumerate}
    Since $\Psi_{i,j}\in\ckf$ for each $i,j$, it follows from Lemma \ref{lem_fxtbd} that there exists some $\Psi\in\ckf$ such that \begin{equation*}
        \Psi(s)\le \min_{i,j=1,2}\Psi_{ij}(s),\quad \forall s\ge 0.
    \end{equation*}
    Now we can define $I(t)=\{(i,j) : V(x(t))=V_{ij}(x_i(t))\}$ for $t\in [0, T_\text{max})$,
    and we can take the upper right Dini derivative of $V$ along the trajectories of \eqref{sysnu} to obtain
    \begin{align*}
        D^+V(x(t))&=\max_{(i,j)\in I(t)} D^+ V_{ij}(x_i(t))\\
        &=\max_{(i,j)\in I(t)} \nabla V_{ij}(x_i(t))^\top f_i(x_1(t), x_2(t))
        \\
        &\le \max_{(i,j)\in I(t)}-\Psi_{ij}(V_{ij}(x_i(t)))\\
        &\le \max_{(i,j)\in I(t)} -\Psi(V_{ij}(x_i(t)))\\
        &=-\Psi(V(x(t))),
    \end{align*}
    where the first equality follows from \cite[Lemma 2.9]{giorgi1992dini}. The trajectories are bounded on $[0, T_\text{max})$, and hence they are defined for all $t\ge 0$. This concludes the proof.
\end{proof}
\vspace{0.1cm}

Our approach is inspired by the ideas from \cite{jiang1996lyapunov,sgfinite}, in that we leverage the gain functions $\gamma_i$ to construct, for the interconnected system, a FxTS Lyapunov function candidate defined in the max form \eqref{lfmax}. Since the constructed FxTS Lyapunov function is possibly non-smooth, we must rely on the Dini derivative formulation in \eqref{fxts_lyapunov_bd} to establish FxTS. To ensure that the FxT ISS Lyapunov function properties are preserved, we also utilize the power-function-based scaling technique introduced in \cite{sgfinite} (this is the role of $\sigma_\lambda$). However, one important distinction to note is that since \cite{sgfinite} is only concerned with finite time stability notions, it is sufficient in that setting for the gains $\hat{\gamma}_i(\cdot)$ to only have a class $\ckp$ approximation near the origin. Since we are concerned with the global notion of \emph{fixed-time stability}, we now require that the class $\ckp$ approximation holds globally. One challenge that arises is that if $\hat{\gamma}_1,\hat{\gamma}_2\in\ckp$, the small-gain condition will never be satisfied if there is a $\hat{\gamma}_i$ containing terms with different powers. To address this, we also allow for the gains $\hat{\gamma}_i$ to be class $\cki$, and we show in the Lemma \ref{lem_invfxt} of the Appendix that such functions can also be appropriately scaled to preserve the FxT ISS Lyapunov function property.

\vspace{0.1cm}
\begin{remark}
    While most functions in $\cki$ do not have a closed-form expression, this is not particularly problematic for our work. Indeed, if $\hat{\gamma}_i\in\cki$, then $\hat{\gamma}_i^{-1}\in\ckp$, so we can use the following equivalence 
    \begin{equation*}
        V_i(x_i)\ge \hat{\gamma}_i(V_j(x_j))\Longleftrightarrow \hat{\gamma}_i^{-1}(V_i(x_i))\ge V_j(x_j),
    \end{equation*}
    to verify \eqref{fxtiss_imp_assump} in a simplified manner. Moreover, if $\hat{\gamma}_1\in\ckp$ and $\hat{\gamma}_2\in\cki$, we can leverage the following fact
    \begin{equation}\label{sgeq}
        \hat{\gamma}_1\circ\hat{\gamma}_2(s)<s\Longleftrightarrow \hat{\gamma}_1(s)<\hat{\gamma}_2^{-1}(s),
    \end{equation}
    to verify the small-gain condition. Since $\hat{\gamma}_2^{-1}\in\ckp$, the condition $\hat{\gamma}_1(s)<\hat{\gamma}_2^{-1}(s)$ can be checked using straightforward methods, such as by applying Lemma \ref{lem_sandw}. This is further detailed in the following example.\QEDB
\end{remark}
\vspace{0.1cm}
\subsection{Second-Order example}
Consider the dynamics
\begin{subequations}\label{ex0}
    \begin{align}
        \dot{x}&=f(x)+\varepsilon_1\left(\sg{y}^{p}+\sg{y}^{q}\right)\\
        \dot{y}&=g(y)-\varepsilon_2\sg{x}^r,\label{exsuby0}
    \end{align}
\end{subequations}
where $p, q, r>0$ and $\sg{\cdot}^a=|\cdot|^a\text{sgn}(\cdot)$. We assume that $f(\cdot)$ and $g(\cdot)$ are continuous vector fields such that the origins of the systems $\dot{x}=f(x)$ and $\dot{y}=g(y)$ are FxTS. We also assume that they admit quadratic FxTS Lyapunov functions $V(x)=x^2$ and $W(y)=y^2$, respectively. We will show that the origin for the system \eqref{ex0} is FxTS, as long as $\varepsilon_1, \varepsilon_2>0$ are sufficiently small and $p,q,r$ satisfy some conditions specified below.
\begin{thm}\label{thm_ex}
    Consider the system \eqref{ex0} with $p,q,r>0$ and $f:\re\to\re, g:\re\to\re$ are continuous functions for which there exists $m_1, m_2>0, \alpha_1, \alpha_2\in (0,1)$, and $\beta_1, \beta_2>1$ such that the following inequalities hold:
    \begin{subequations}
    \begin{align*}
        2xf(x)&\le -2m_1 |x|^{2\alpha_1}-2m_1 |x|^{2\beta_1}\\2yg(y)&\le -2m_2 |y|^{2\alpha_2}-2m_2 |y|^{2\beta_2},
    \end{align*}
    \end{subequations}
    for all $x,y\in\re$. Moreover, let $L_r=\max(1, 2^{r-1})$ and suppose $pr,qr\in [\alpha_2, \beta_2]$ and $\varepsilon_1, \varepsilon_2$ satisfy
    \begin{equation}\label{epbd}
    0<\varepsilon_1^r\varepsilon_2<\frac{m_1^r m_2}{2^r L_r^\frac12}.
    \end{equation}
    Then, the origin of \eqref{ex0} is FxTS.\QEDB
\end{thm}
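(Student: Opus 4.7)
The plan is to apply Theorem \ref{thm_sgt} with the quadratic Lyapunov candidates $V_1(x)=x^2$ and $V_2(y)=y^2$, so the task reduces to verifying Assumption \ref{assump_sysnu} for each subsystem and then exhibiting admissible gains $\hat\gamma_1,\hat\gamma_2\in\ckp\cup\cki$ that dominate the resulting FxT ISS gains $\gamma_1,\gamma_2$ and whose composition satisfies $\hat\gamma_1\circ\hat\gamma_2(s)<s$ for all $s>0$.

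For the $y$-subsystem I would compute $\dot V_2=2yg(y)-2\varepsilon_2 y\sg{x}^r$, apply the bound $2yg(y)\le -2m_2(V_2^{\alpha_2}+V_2^{\beta_2})$, and handle the cross term via the weighted Young inequality $2\varepsilon_2|y||x|^r\le\varepsilon_2\lambda^2|y|^2+\varepsilon_2\lambda^{-2}|x|^{2r}$. Under the candidate implication $V_2\ge c_2 V_1^r$ (equivalently $|x|^{2r}\le V_2/c_2$), optimizing $\lambda$ collapses this to $2\varepsilon_2 c_2^{-1/2}|y|^2$, which by Lemma \ref{lem_sandw} with $1\in[\alpha_2,\beta_2]$ is bounded by $2\varepsilon_2 c_2^{-1/2}(V_2^{\alpha_2}+V_2^{\beta_2})$. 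The resulting FxT ISS estimate holds with gain $\gamma_2(s)=c_2 s^r\in\ckp$ whenever $c_2>\varepsilon_2^2/m_2^2$. For the $x$-subsystem, a similar but more delicate derivation using Young's inequality on each of $2\varepsilon_1|x||y|^p$ and $2\varepsilon_1|x||y|^q$, together with Lemma \ref{lem_sandw} applied to the $|x|$-power terms, produces a FxT ISS gain $\gamma_1\in\cki$ whose inverse can be written in the form $\gamma_1^{-1}(s)=B_\star(s^{\alpha_2/p}+s^{\beta_2/q})$ for an explicit $B_\star$ depending on $\varepsilon_1,m_1$ (assuming without loss of generality $p\le q$).

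With these gains in hand, I would set $\hat\gamma_2(s)=\tilde c_2 s^r$ with $\tilde c_2$ just above $c_2$, and $\hat\gamma_1\in\cki$ with $\hat\gamma_1^{-1}(s)=B(s^{\alpha_2/p}+s^{\beta_2/q})$ for a constant $B$ to be chosen. The hypothesis $pr,qr\in[\alpha_2,\beta_2]$ rearranges to $r\in[\alpha_2/p,\beta_2/q]$, so Lemma \ref{lem_sandw} gives $s^r\le s^{\alpha_2/p}+s^{\beta_2/q}$; hence $\hat\gamma_1^{-1}(s)\ge Bs^r$, and $B>\tilde c_2$ ensures $\hat\gamma_2(s)<\hat\gamma_1^{-1}(s)$, which by the equivalence \eqref{sgeq} is the required small-gain condition. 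The dominance $\hat\gamma_i\ge\gamma_i$ forces $\tilde c_2>c_2$ and $B\le B_\star$, so the only remaining task is to check feasibility of $c_2<\tilde c_2<B\le B_\star$. Substituting the lower bound on $c_2$ and the explicit expression for $B_\star$ and using the power-function inequality $(a+b)^{r/2}\le L_r^{1/2}(a^{r/2}+b^{r/2})$, this feasibility condition collapses to exactly the bound $\varepsilon_1^r\varepsilon_2<m_1^rm_2/(2^r L_r^{1/2})$ in \eqref{epbd}.

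The main obstacle is the construction of the $x$-subsystem's FxT ISS gain $\gamma_1\in\cki$ in the specific form $\gamma_1^{-1}(s)=B_\star(s^{\alpha_2/p}+s^{\beta_2/q})$: this requires a carefully weighted Young splitting of each coupling term so that the resulting $|y|$-power exponents lie inside $[2\alpha_2,2\beta_2]$, and Lemma \ref{lem_sandw} then produces the desired inverse form. This is also where the constant $L_r^{1/2}$ enters the final $\varepsilon$-bound, through the inequality $(a+b)^{r/2}\le L_r^{1/2}(a^{r/2}+b^{r/2})$ applied when the $x$-side estimate is raised to the power $r/2$ to match the $y$-side structure. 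Once this construction is carried out and $\hat\gamma_1,\hat\gamma_2$ are verified to satisfy both conditions of Theorem \ref{thm_sgt}, the latter applies directly and certifies FxTS of the origin of \eqref{ex0}.
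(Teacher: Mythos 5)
Your overall skeleton is right — use $V_1(x)=x^2$, $V_2(y)=y^2$, compute FxT ISS implications for each block, pick $\hat\gamma_1,\hat\gamma_2$ and apply Theorem~\ref{thm_sgt} via the equivalence~\eqref{sgeq}. But you have flipped which gain lives in $\ckp$ and which lives in $\cki$, and that flip is not a cosmetic choice: it is exactly where the argument breaks. The paper puts $\gamma_1\in\ckp$ and $\gamma_2\in\cki$, and the assignment is forced by the structure of the coupling. In the $x$-equation both cross terms carry a common factor $|x|$, so on the region $|x|>\tfrac{\varepsilon_1}{m_1-c_1/2}(|y|^p+|y|^q)$ the whole coupling is absorbed into $(2m_1-c_1)|x|^2=(2m_1-c_1)V_1$, which Lemma~\ref{lem_sandw} majorizes by $(2m_1-c_1)(V_1^{\alpha_1}+V_1^{\beta_1})$ \emph{for arbitrary} $\alpha_1\in(0,1),\ \beta_1>1$. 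No Young splitting and no constraint on $\alpha_1,\beta_1$ are needed, and the region condition squares to $\gamma_1(s)=\bigl(\tfrac{\varepsilon_1}{m_1-c_1/2}\bigr)^2(s^{p/2}+s^{q/2})^2\in\ckp$. The hypothesis $pr,qr\in[\alpha_2,\beta_2]$ is then consumed entirely on the $y$-side, where Young's inequality leaves a residual $\tfrac{\varepsilon_2^2}{m_2}V_1^r$ that must be dominated by $(m_2-c_2)(W^{\alpha_2}+W^{\beta_2})$; inverting this gives $\gamma_2^{-1}(s)=\bigl(\tfrac{m_2(m_2-c_2)}{L_r\varepsilon_2^2}\bigr)^{1/r}(s^{\alpha_2/r}+s^{\beta_2/r})\in\ckp$, i.e.\ $\gamma_2\in\cki$, with the factor $L_r$ appearing when you raise $(a+b)$ to the $r$-th power.

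Your swapped construction runs into a genuine obstruction on the $x$-side. You posit $\gamma_1\in\cki$ with $\gamma_1^{-1}(s)=B_\star(s^{\alpha_2/p}+s^{\beta_2/q})$, so the FxT ISS implication for the $x$-subsystem would read: $W\le B_\star(V_1^{\alpha_2/p}+V_1^{\beta_2/q})\ \Rightarrow\ \dot V_1\le-c_1(V_1^{\alpha_1}+V_1^{\beta_1})$. Substituting that bound into the coupling $2\varepsilon_1V_1^{1/2}(W^{p/2}+W^{q/2})$ produces $V_1$-powers such as $\tfrac{1+\alpha_2}{2}$, $\tfrac{1+\beta_2 p/q}{2}$, $\tfrac{1+\alpha_2 q/p}{2}$, $\tfrac{1+\beta_2}{2}$, and for Lemma~\ref{lem_sandw} to absorb these you would need all of them to lie in $[\alpha_1,\beta_1]$. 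Nothing in the theorem's hypotheses guarantees this; the theorem relates $p,q,r$ only to $\alpha_2,\beta_2$, and it is easy to choose $\alpha_1$ close to $1$ (or $\beta_1$ close to $1$) so that $\tfrac{1+\alpha_2}{2}\notin[\alpha_1,\beta_1]$. So the gain form you assert for the $x$-block cannot be derived under the stated assumptions. (Relatedly, your $\gamma_1^{-1}$ involves $\alpha_2,\beta_2$ but the exponents that would actually come out of a Young-type estimate on the $x$-side would be governed by $\alpha_1,\beta_1$; the $\alpha_2,\beta_2$ enter only through the $y$-subsystem decay rate.) The fix is precisely the paper's: keep the $x$-gain as a sum in $W$ (so $\gamma_1\in\ckp$), and do the inversion on the $y$-side where the $pr,qr\in[\alpha_2,\beta_2]$ hypothesis lives.

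The rest of your outline (the Young step on the $y$-side, the role of $L_r$, and the final reduction to~\eqref{epbd} by sending $c_1\to0$, $c_2\to0$) is consistent with the paper once the gains are assigned the right way.
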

\vspace{0.1cm}
\begin{proof}
    Taking $V(x)=x^2$ and $W(y)=y^2$, we have:
    \begin{align*}
    \dot{V}&=2xf(x)+2\varepsilon_1x\sg{y}^p+2\varepsilon_1x\sg{y}^q\\
    &\le -2m_1 V^{\alpha_1}-2m_1 V^{\beta_1}+2|x|(\varepsilon_1|y|^p+\varepsilon_1|y|^q).
\end{align*}
Let $c_1\in (0, m_1)$, which yields the following implication
\begin{equation*}
    |x|>\frac{\varepsilon_1}{m_1-\frac12 c_1}(|y|^p+|y|^q)\Rightarrow \dot{V}\le -c_1 V^{\alpha_1}-c_1 V^{\beta_1}.
\end{equation*}
Thus, we have
\begin{align*}    \gamma_1(s)&:=\left(\frac{\varepsilon_1}{m_1-\frac12 c_1}\right)^2 (s^{\frac{p}{2}}+s^{\frac{q}{2}})^2\\&\le 2\left(\frac{\varepsilon_1}{m_1-\frac12 c_1}\right)^2(s^p+s^q).
\end{align*}
For the $y$ subsystem, we have
\begin{align*}
    \dot{W}&=2yg(y)-2\varepsilon_2 y\sg{x}^r\\
    &\le -m_2 W^{\alpha_2}-m_2 W^{\beta_2}+ \frac{\varepsilon_2^2}{m_2}|x|^{2r}.
\end{align*}
If we fix some $c_2\in (0, m_2)$, we obtain the following
\begin{align*}
    \dot{W}&\le -c_2 W^{\alpha_2}-c_2 W^{\beta_2}\\&~~~+\left(\frac{\varepsilon_2^2}{m_2} |x|^{2r}-(m_2-c_2)\left(W^{\alpha_2}+W^{\beta_2}\right)\right).
\end{align*}
Let $L_r=\max(1, 2^{r-1})$, and suppose $p,q,r$ satisfy $pr,qr\in [{\alpha_2}, {\beta_2}]$. We want $\varepsilon_1, \varepsilon_2$ sufficiently small such that
\begin{equation}\label{gbd}
    \gamma_2^{-1}(s):=\left(\frac{m_2(m_2-c_2)}{L_r\varepsilon_2^2}\right)^\frac{1}{r}\left(s^{\frac{\alpha_2}{r}}+s^{\frac{\beta_2}{r}}\right)> \gamma_1(s),
\end{equation}
for all $s> 0$. By Lemma \ref{lem_sandw}, it suffices to have
\begin{equation*}
    \left(\frac{m_2(m_2-c_2)}{L_r\varepsilon_2^2}\right)^\frac{1}{r}\ge \frac{4\varepsilon_1^2}{(m_1-\frac12c_1)^2},
\end{equation*}
which can be obtained whenever

\begin{equation}\label{barep}
    \varepsilon_1^r\varepsilon_2\le\frac{(m_1-\frac12 c_1)^r m_2^\frac12 (m_2-c_2)^\frac12}{2^{r}L_r^\frac12}.
\end{equation}
Now, suppose $V(x)<\gamma_2^{-1}(W(y))$, which means:
\begin{equation*}
    |x|^2<\left(\frac{m_2(m_2-c_2)}{L_r\varepsilon_2^2}\right)^\frac{1}{r}\left(W^{\frac{\alpha_2}{r}}+W^{\frac{\beta_2}{r}}\right).
\end{equation*}
Raising to the $r$ power yields:
\begin{align*}
    |x|^{2r}
    &\le \frac{m_2(m_2-c_2)}{\varepsilon_2^2}(W^{\alpha_2}+W^{\beta_2}).
\end{align*}
Thus,
\begin{equation*}
    \dot{W}\le -c_2 W^{\alpha_2}-c_2 W^{\beta_2},\quad W(y)>\gamma_2(V(x)).
\end{equation*}

Recall that $c_1$ and $c_2$ can be arbitrarily small. Hence, if $\varepsilon_1^r\varepsilon_2$ satisfies \eqref{epbd}, it follows immediately that $c_1, c_2$ can be chosen such that $\varepsilon_1^r\varepsilon_2$ also satisfies \eqref{barep}. Then \eqref{gbd} implies $\gamma_2(s)<\gamma_1^{-1}(s)$ for all $s>0$. We can pick $\varepsilon>0$  such that \eqref{sgeq} is satisfied with $\hat{\gamma}_i(s)=(1+\varepsilon)\gamma_i(s)$, allowing us to conclude that the small-gain condition is satisfied. Hence, the origin of the system \eqref{ex0} is FxTS.
\end{proof}
\vspace{0.1cm}

To verify our results numerically, we plot the trajectories of \eqref{ex0} with varying initial conditions, where the parameters are chosen to yield the following system
\begin{subequations}\label{ex0plot}
    \begin{align}        \dot{x}&=-x^\frac13-x^3+0.3\left(\sg{y}^{2}+\sg{y}^{2.5}\right)\\
        \dot{y}&=-\sg{y}^\frac12-\sg{y}^2-0.1\sg{x}^{\frac49}.
    \end{align}
\end{subequations}

\begin{figure}[t!]
  \centering \includegraphics[width=0.4\textwidth]{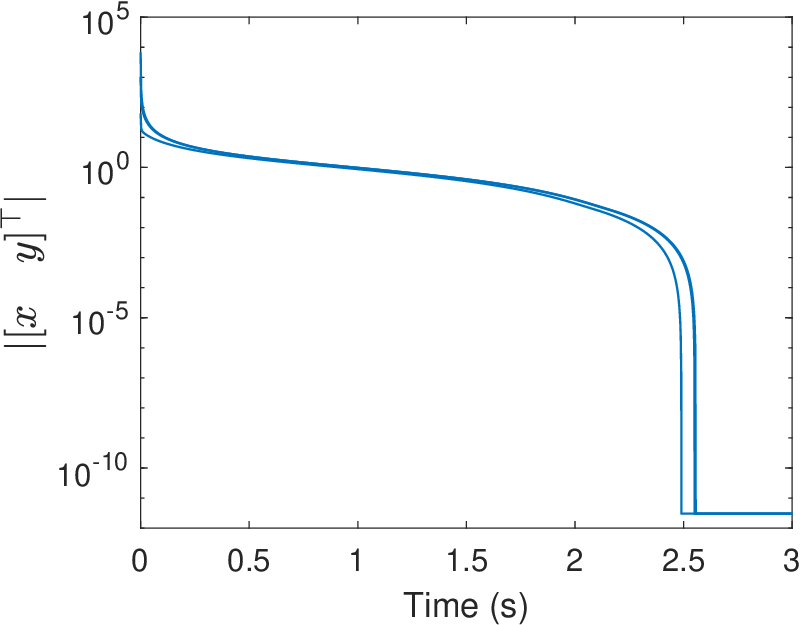}
    \caption{Trajectories of system \eqref{ex0plot} with varying $x(0),y(0)$.} \label{fxt1plot}
    \vspace{-0.4cm}
\end{figure}

It can be verified that this choice of parameters satisfy \eqref{epbd}. The trajectories are shown in Figure \ref{fxt1plot}, illustrating the FxTS property.
\begin{remark}
    One important observation that can be made is that the presence of multiple exponents of $V$ in the classical FxTS and FxT ISS Lyapunov bounds in \eqref{fxts_lyapunov_bd} and \eqref{fxtiss_imp} allows us to handle $\ckp$ gains and interconnections that contain terms with differing exponents. For standard Lyapunov-based nonlinear small-gain theory \cite{jiang1996lyapunov}, the structure of the rate of decrease of the Lyapunov function (upon satisfaction of the ISS condition) results in a limited class of admissible gain functions. Even in the case of finite-time ISS \cite{sgfinite} for interconnections, gain functions in $\ckp$ often only contain one term, the exponents of the different gain functions need to multiply to 1. Indeed, if the origins of $\dot{x}=f(x)$ and $\dot{x}=g(x)$ are only assumed to be finite-time stable, then \eqref{ex0} will not satisfy the finite-time small-gain condition from \cite{sgfinite} if $pr\neq 1$ or $qr\neq 1$. In our setting, this can be relaxed by leveraging the different exponents of $V$ and Lemma \ref{lem_sandw}. Instead of precisely requiring $pr=qr=1$, or equivalently $p=q=\frac{1}{r}$, we now only require $pr, qr\in [\alpha_2, \beta_2]$, which is a neighborhood of 1. This opens the door to generalizations of \eqref{ex0} that contain arbitrarily many cross terms between the subsystems, which will be of interest for future work. \QEDB
\end{remark}

\section{Fixed-Time Feedback Optimization without Timescale Separation}
\label{sec_fixedtime}
In this section, we apply our results to study an open problem of practical interest: feedback optimization of dynamical systems with fixed-time convergence without timescale separation. Standard feedback optimization schemes require inducing a timescale separation between the plant and controller \cite{9075378,8673636,bianchin2022online}, as it enables stability analysis with tools from singular perturbation theory. This was also the case recently in \cite{11239427}, where the authors study multi-timescale fixed-time feedback optimization via composite Lyapunov functions. In contrast to this setting, we will show that, under suitable assumptions on the plant and controller, approaching this problem with small-gain theory eliminates the need for a time-scale separation. We consider plants of the form
\begin{equation}\label{exsys}
    \dot{x}=f(x,u),
\end{equation}
where $x\in\re^n$ is the plant state and $u\in\re^m$ is a measurable and bounded control input. We will make the following assumption on the plant \eqref{exsys}:
\begin{assumption}\label{assump_ex}
    For system \eqref{exsys}, there exists $P\in\re^{n\times m}$ such that $f(Pu,u)=0$ for all $u\in\re^m$. Moreover, there exists $\ell,a,b>0, p\in (0,1), q>1$ such that
    \begin{align}\label{ex_fxtiss}
        &|x-Pu^*|\ge\ell |u-u^*|\notag\\&\Rightarrow 2(x-Pu^*)^\top f(x,u)\le -a|x-Pu^*|^{2p}-b|x-Pu^*|^{2q},
    \end{align}
    for all $x\in\re^n$, $u\in\re^m$, and $u^*\in\re^m$.\QEDB
\end{assumption}
\vspace{0.1cm}
\begin{remark}
    The mapping $Pu$ is referred to as the \emph{steady-state mapping} for the dynamics \eqref{exsys}. In works that study feedback optimization using singular perturbation theory, it is typically the case that the steady state mapping is assumed to exhibit appropriate stability properties uniformly in $u$ (see, e.g \cite{11239427, 9075378, bianchin2022online}). Here, we impose a very similar condition but in the FxT ISS framework. Essentially, we are requiring that for each fixed reference input $u^*$, the deviation $x-Pu^*$ is FxT ISS with respect to the deviation on the ``input" $u-u^*$, which is established via the quadratic FxT ISS Lyapunov function $|x-Pu^*|^2$. We require that this property holds uniformly in $x, u$, and $u^*$. One class of systems satisfying Assumption \ref{assump_ex} are those of the following form
    \begin{align}\label{accplant}
    \dot{x}=-\frac{A_1(x-P u)}{|x-P u|^{\tilde{p}}}- \frac{A_2(x-P u)}{|x-P u|^{\tilde{q}}},
\end{align}
where $A_1, A_2\in\re^{n\times n}$ with $A_1, A_2\succ 0$, $\tilde{p}\in (0,1)$, and $\tilde{q}<0$. The system \eqref{accplant} is a significant generalization of the plants considered in \cite{10644358}, where the authors used $m=n$ and $A_1=A_2=P=I_n$.
For more insight into system \eqref{accplant}, consider the following scalar system
\begin{equation}
    \dot{x}=wx+cv
\end{equation}
where $w,c\in\re\setminus\{0\}$ are known parameters and $v$ is a control input. Then, the following state feedback law:
\begin{equation*}
    v=-\frac{w}{c}u-\frac{2|w|}{c}\left(\frac{x-u}{|x-u|^{\tilde{p}}}+ \frac{x-u}{|x-u|^{\tilde{q}}}\right),
\end{equation*}
where $u\in\re$ is an external input, renders the system into a closed form that satisfies Assumption \ref{assump_ex}. Extensions to multi-variable settings can be done using results from the FxT control literature \cite{6104367, 10018220, polyakov2024fixed}.
\QEDB
\end{remark}
\vspace{0.1cm}

Our primary goal is to design a dynamic feedback law on $u$ to stabilize \eqref{exsys} while also optimizing a cost function $\phi(x,u)$. For this work, we will be primarily concerned with quadratic cost functions of the following form
\begin{equation}\label{quad_cost}
    \phi(x,u)=\frac12 x^\top Q_1 x+b_1^\top x+\frac12 u^\top Q_2 u+b_2^\top u,
\end{equation}
where $Q_1\in\re^{n\times n}, Q_2\in\re^{m\times m}, b_1\in\re^n$, and $b_2\in\re^m$ with $Q_1, Q_2\succ 0$. Since the goal is to optimize \eqref{quad_cost} while stabilizing \eqref{exsys}, we can substitute the steady state mapping $x=Pu$ into \eqref{quad_cost} to arrive at the following optimization problem
    \begin{align}\label{ex_opt}
        \min_u \Phi(u),
    \end{align}
where $\Phi(u)=\phi(Pu, u)$, with $\Phi(u)$ given by
\begin{equation*}
    \Phi(u):=\frac12 u^\top (P^\top Q_1 P+Q_2)u+(P^\top b_1+b_2)^\top u.
\end{equation*}
To solve \eqref{ex_opt}, we propose a control law based on a fixed-time gradient flow on $u$, i.e $\dot{u}=-\frac{\nabla\Phi(u)}{|\nabla\Phi(u)|^{\xi_1}}-\frac{\nabla\Phi(u)}{|\nabla\Phi(u)|^{\xi_2}}$,
where $\xi_1\in (0,1)$ and $\xi_2<0$. Following the ideas from \cite{11239427, 9075378, bianchin2022online}, after evaluating $\nabla\Phi(u)$, we can substitute $x$ back into its steady state value to obtain the following feedback controller on $u$:
\begin{equation}\label{ex_fdbk}
    \dot{u}=-\frac{\cf(x,u)}{|\cf(x,u)|^{\xi_1}}-\frac{\cf(x,u)}{|\cf(x,u)|^{\xi_2}},
\end{equation}
where $\cf(x,u)$ is given by
\begin{equation*}
    \cf(x,u)=P^\top Q_1 x+Q_2 u+P^\top b_1+b_2.
\end{equation*}
We are now ready to state the second main result of this paper:

\vspace{0.1cm}
\begin{thm}\label{thm_fbk}
    Consider system \eqref{exsys} interconnected with \eqref{ex_fdbk}, where $\xi_1\in(0,1), \xi_2<0$, and system \eqref{exsys} satisfies Assumption \ref{assump_ex}. Moreover, suppose the following condition holds
    \begin{equation}\label{ex_sgc}
        \ell\frac{\smax(P)\lmax(Q_1)}{\lmin(Q_2)}<1.
    \end{equation}
    Then, the point $(Pu^*, u^*)$ is rendered FxTS, where $u^*$ is the solution of \eqref{ex_opt} and given by
    \begin{equation*}
        u^*=-(P^\top Q_1 P+Q_2)^{-1}(P^\top b_1+b_2).
    \end{equation*}\QEDB
\end{thm}
\begin{proof}
Denote $V(\tx)=|\tx|^2$. We use the change of variables $\tilde{x}=x-Pu^*$ and $\tilde{u}=u-u^*$ to arrive at the following system
    \begin{subequations}
        \begin{align}
            \dot{\tilde{x}}&=f(\tilde{x}+Pu^*, \tilde{u}+u^*)\label{extx}\\
            \dot{\tilde{u}}&=-\frac{\tcf(\tx,\tu)}{|\tcf(\tx,\tu)|^{\xi_1}}-\frac{\tcf(\tx,\tu)}{|\tcf(\tx,\tu)|^{\xi_2}},\label{extu}
        \end{align}
    \end{subequations}
    where $\tcf=P^\top Q_1 \tx+Q_2\tu$.
    By \eqref{ex_sgc}, we can pick $\varepsilon\in (0, \lmin(Q_2))$ such that 
\begin{equation}\label{ex_sgcond}
    \ell\frac{\smax(P)\lmax(Q_1)}{\lmin(Q_2)-\varepsilon}<1.
\end{equation}
    Consider the FxT ISS Lyapunov function $W(\tu)=|\tu|^2$ for \eqref{extu}. It follows by direct computation that if $|\tu|\ge \frac{\smax(P)\lmax(Q_1)}{\lmin(Q_2)-\varepsilon}|\tx|$, then the following inequalities hold:
    \begin{subequations}\label{expfeq}
        \begin{align}
            \tu^\top P^\top Q_1 \tx+\tu^\top Q_2 \tu&\ge \varepsilon |\tu|^2\\
            |P^\top Q_1 \tx+Q_2 \tu|^{\xi_1}
    &\le (2\lmax(Q_2))^{\xi_1}|\tu|^{\xi_1}\\
    |P^\top Q_1 \tx+Q_2 \tu|^{-\xi_2}
    &\ge \varepsilon^{-\xi_2}|\tu|^{-\xi_2}.
        \end{align}
    \end{subequations}
    We can then leverage the inequalities \eqref{expfeq} with the FxT ISS Lyapunov function $W(\tu)=|\tu|^2$ to obtain that, for $W(\tu)\ge \left(\frac{\smax(P)\lmax(Q_1)}{\lmin(Q_2)-\varepsilon}\right)^2V(\tx)$, the following holds along the trajectories of \eqref{extu}:
    \begin{align*}
        \dot{W}&\le -\frac{2\varepsilon}{(2\lmax(Q_2))^{\xi_1}}W^{1-\frac12\xi_1}-2\varepsilon^{1-\xi_2}W^{1-\frac12\xi_2}.
    \end{align*}
    Moreover, by Assumption 2, we know that for $V(\tx)\ge \ell^2 W(\tu)$, the following holds along the trajectories of \eqref{extx}
    \begin{align*}
        \dot{V}&=2\tx^\top f(\tilde{x}+Pu^*, \tilde{u}+u^*)\le -a V^p-bV^q.
    \end{align*}
    By \eqref{ex_sgcond} and Theorem \ref{thm_sgt}, we conclude that the origin for the $(\tx, \tu)$ system is FxTS, which establishes the result.
\end{proof}

\begin{figure}[t!]
  \centering \includegraphics[width=0.4\textwidth]{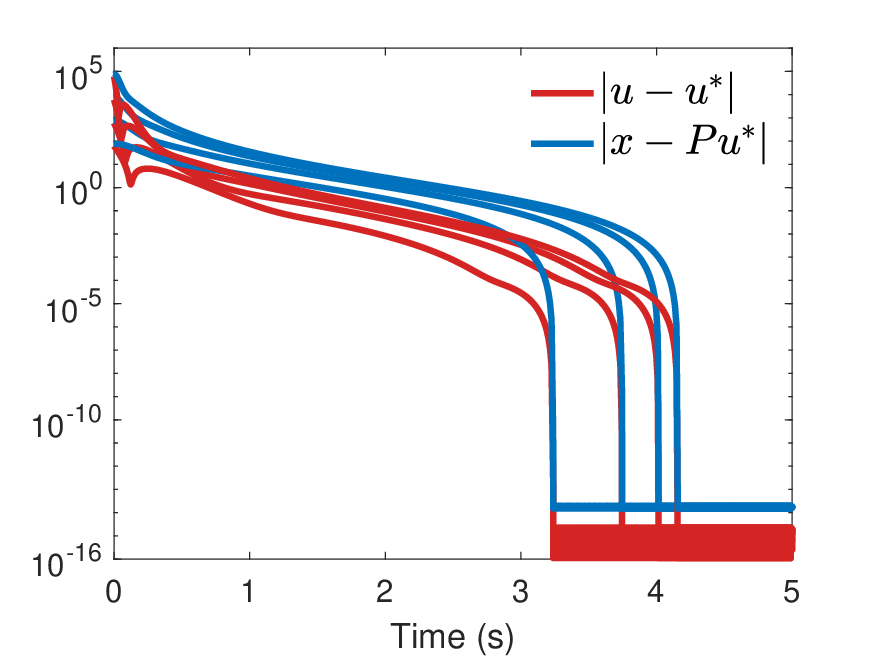}
    \caption{Trajectories of system \eqref{accplant} with \eqref{ex_fdbk} and varying initial conditions, where the parameters are given by \eqref{accplant_param} and \eqref{quadcost_param}.} \label{fxtsgplot}
    \vspace{-0.4cm}
\end{figure}

To illustrate Theorem \ref{thm_fbk} numerically, we consider system \eqref{accplant} interconnected with the dynamic controller \eqref{ex_fdbk}, where $x,u\in\re^2$ and
\begin{align}\label{accplant_param}
    A_1=\begin{bmatrix}
        \frac32 & \frac15\\ \frac15 & \frac65
    \end{bmatrix},\quad A_2=\begin{bmatrix}
        \frac65 & \frac{3}{10} \\ \frac{3}{10}&\frac85
    \end{bmatrix},\quad P=\begin{bmatrix}
        1 & \frac23\\ \frac43 &\frac13
    \end{bmatrix},
\end{align} and the cost function parameters of \eqref{quad_cost} are given by
\begin{equation}\label{quadcost_param}
    Q_1=\begin{bmatrix}
        \frac12 & \frac14\\ \frac14& \frac15
    \end{bmatrix},\quad Q_2=\begin{bmatrix}
        4 &\frac25 \\ \frac25 & 4
    \end{bmatrix},\quad b_1=\begin{bmatrix}
        3\\ 2
    \end{bmatrix},\quad b_2=\begin{bmatrix}
        1\\ 1
    \end{bmatrix}.
\end{equation}
It is straightforward to verify that \eqref{accplant} with \eqref{accplant_param} satisfies Assumption \ref{assump_ex} with $\ell=3.05$, and that the given parameters also satisfy \eqref{ex_sgc}. The trajectories are simulated with varying initial conditions in Figure \ref{fxtsgplot}.
\section{Conclusion}
\label{sec_conclusions}
We introduced a Lyapunov-based small-gain theorem for establishing fixed-time stability (FxTS) in interconnected dynamical systems. To the best of our knowledge, this is the first result of its kind in the literature on fixed-time stability. The theoretical developments are demonstrated through two illustrative examples. The first example, a second-order interconnection, provides a clear paradigm for applying the proposed tools to study interconnections of FxTS vector fields with homogeneous coupling terms. In the second example, we apply our results to analyze fixed-time feedback optimization in dynamical systems without requiring time-scale separation between the plant and the controller. Future work includes extending the result to large-scale networked and hybrid systems. Extensions to FxT ISS and applications to games can be found in \cite{tang2025lyapunov}. 

\bibliographystyle{IEEEtran}
\bibliography{Biblio.bib}
\section{Appendix}
\begin{lemma}\label{lem_fxtbd}
    If $\alpha_1,...,\alpha_N\in\ckf$, there exists $\alpha\in\ckf$ such that $\min_k\alpha_k(s)\ge \alpha(s)$ for all $s\ge 0$.\QEDB
\end{lemma}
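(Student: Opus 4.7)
The plan is to exploit the structural form of $\ckf$ functions by selecting a single pair of exponents that is compatible with all $\alpha_k$ simultaneously, and then to dominate $\min_k \alpha_k$ by a scaled two-term polynomial in these exponents.

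First I would write each $\alpha_k(s) = a_k s^{p_k} + b_k s^{q_k}$ with $a_k, b_k > 0$, $p_k \in (0,1)$, and $q_k > 1$, and define the ``worst-case'' exponents
\begin{equation*}
    \bar{p} := \max_{k=1,\dots,N} p_k \in (0,1), \qquad \underline{q} := \min_{k=1,\dots,N} q_k > 1.
\end{equation*}
The candidate lower bound will be $\alpha(s) := c\,s^{\bar{p}} + c\,s^{\underline{q}}$ for a sufficiently small $c > 0$. Since $\bar{p} \in (0,1)$ and $\underline{q} > 1$, any such $\alpha$ lies in $\ckf$ by definition, so the only task is to choose $c$ so that $\alpha(s) \le \alpha_k(s)$ for every $k$ and every $s \ge 0$.

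To choose $c$, I would split the analysis at $s=1$ to handle each regime cleanly. For $s \in (0,1]$, since $p_k \le \bar{p}$ and $\underline{q} > \bar{p}$, monotonicity of $s\mapsto s^r$ on $[0,1]$ gives $s^{p_k} \ge s^{\bar{p}}$ and $s^{\underline{q}} \le s^{\bar{p}}$, so
\begin{equation*}
    \alpha_k(s) \ge a_k s^{p_k} \ge a_k s^{\bar{p}} \ge \tfrac{a_k}{2}\bigl(s^{\bar{p}} + s^{\underline{q}}\bigr).
\end{equation*}
For $s \in [1,\infty)$, since $q_k \ge \underline{q}$ and $\bar{p} < \underline{q}$, monotonicity on $[1,\infty)$ gives $s^{q_k} \ge s^{\underline{q}}$ and $s^{\bar{p}} \le s^{\underline{q}}$, so
\begin{equation*}
    \alpha_k(s) \ge b_k s^{q_k} \ge b_k s^{\underline{q}} \ge \tfrac{b_k}{2}\bigl(s^{\bar{p}} + s^{\underline{q}}\bigr).
\end{equation*}
Combining the two regimes yields $\alpha_k(s) \ge c_k\bigl(s^{\bar{p}} + s^{\underline{q}}\bigr)$ for all $s \ge 0$, with $c_k := \tfrac{1}{2}\min\{a_k, b_k\} > 0$.

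Taking $c := \min_{k=1,\dots,N} c_k > 0$ and summing/minimizing yields $\min_k \alpha_k(s) \ge c\bigl(s^{\bar{p}} + s^{\underline{q}}\bigr) = \alpha(s)$ for all $s \ge 0$, which completes the argument. I do not expect a genuine obstacle here; the only subtlety is selecting the exponents $\bar{p}$ and $\underline{q}$ in a way that is simultaneously admissible for all indices and preserves the $\ckf$ structure. The split-at-$s=1$ comparison then reduces everything to elementary monotonicity of power functions.
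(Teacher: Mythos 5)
Your proof is correct and takes essentially the same route as the paper: both pick the worst-case exponents $\bar{p}=\max_k p_k$ and $\underline{q}=\min_k q_k$ and the constant $\tfrac12\min_k\min\{a_k,b_k\}$, the only difference being that the paper compresses your split-at-$s=1$ comparison into an invocation of Lemma~\ref{lem_sandw} (which itself is precisely that split).
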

\begin{proof}
    Let $\alpha_i(s)=a_i s^{p_i}+b_i s^{q_i}$, where $p_i\in (0,1)$ and $q_i>1$ for $i=1,...,N$. Denote $m_i=\min(a_i, b_i)$ and $m=\min_k m_k$. By Lemma \ref{lem_sandw}, we have that $\alpha_i(s)\ge m s^r$ for all $s\ge0$, $i\in [N]$, and $r\in\cap_{j=1}^N [p_j, q_j]=[\max_k p_k, \min_k q_k]$. Fix $p\in [\max_k p_k, 1)$ and $q\in (1, \min_k q_k]$, then
    \begin{equation*}
        \min_k\alpha_k(s)\ge \frac{m}{2}(s^p+s^q)\in\ckf,
    \end{equation*}
    which establishes the result. 
\end{proof}
\vspace{0.1cm}
\begin{lemma}\label{lem_k1scale}
    Let $\Psi\in\ckf$ and $\sigma(s)=\sum_{i=1}^n c_i s^{r_i}$, where $1\le r_1< r_2< ..< r_n$ and $c_i>0$ for $i\in [n]$. Then, there exists $\tilde{\Psi}\in\ckf$ such that $\tilde{\Psi}(\sigma(s))\le\Psi(s)\sigma'(s)$ for all $s>0$.\QEDB
\end{lemma}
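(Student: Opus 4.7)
The plan is to construct $\tilde\Psi$ explicitly in the $\ckf$ form $\tilde\Psi(t)=\tilde a\, t^{\tilde p}+\tilde b\, t^{\tilde q}$, where the exponents are chosen to match the natural dominant behavior of $\Psi(s)\sigma'(s)$ at $s\to 0^+$ and $s\to\infty$. Writing $\Psi(s)=a s^p+b s^q$ with $a,b>0$, $p\in(0,1)$, $q>1$, I would set
\[
\tilde p:=\frac{p+r_1-1}{r_1},\qquad \tilde q:=\frac{q+r_n-1}{r_n},
\]
and verify directly that $\tilde p\in[p,1)\subset(0,1)$ and $\tilde q\in(1,q]$, using $r_1,r_n\ge 1$ together with $p\in(0,1)$ and $q>1$. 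This choice reflects the fact that, after changing variables $\tau=\sigma(s)$, the function $\Psi(s)\sigma'(s)$ behaves like $\tau^{\tilde p}$ as $\tau\to 0^+$ (driven by the $c_1 s^{r_1}$ term) and like $\tau^{\tilde q}$ as $\tau\to\infty$ (driven by the $c_n s^{r_n}$ term).

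Next, I would split the argument into the regimes $s\in(0,1]$ and $s\ge 1$, and use the monotonicity of the powers of $s$ in each regime together with Lemma \ref{lem_sandw}. Let $C:=\sum_{i}c_i$. For $s\in(0,1]$, the ordering $r_i\ge r_1$ gives $\sigma(s)\le C s^{r_1}$, while dropping the second term of $\Psi$ and keeping only the first term of $\sigma'$ yields $\Psi(s)\sigma'(s)\ge a r_1 c_1 s^{p+r_1-1}$. Then $\sigma(s)^{\tilde p}\le C^{\tilde p} s^{r_1\tilde p}=C^{\tilde p} s^{p+r_1-1}$ by construction, and $\sigma(s)^{\tilde q}\le C^{\tilde q} s^{r_1\tilde q}\le C^{\tilde q} s^{p+r_1-1}$ because $r_1\tilde q>r_1>p+r_1-1$ and $s\le 1$. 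Symmetrically, for $s\ge 1$ the bound $\sigma(s)\le C s^{r_n}$ holds, and $\Psi(s)\sigma'(s)\ge b r_n c_n s^{q+r_n-1}$; here $\sigma(s)^{\tilde q}\le C^{\tilde q} s^{q+r_n-1}$ by construction of $\tilde q$, and $\sigma(s)^{\tilde p}\le C^{\tilde p} s^{r_n\tilde p}\le C^{\tilde p} s^{q+r_n-1}$ because $r_n\tilde p<r_n<q+r_n-1$ and $s\ge 1$.

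Finally, I would collect both regimes and pick $\tilde a,\tilde b>0$ small enough, e.g.\ any pair satisfying $\tilde a C^{\tilde p}+\tilde b C^{\tilde q}\le \min\{a r_1 c_1,\; b r_n c_n\}$. This yields $\tilde\Psi(\sigma(s))\le \Psi(s)\sigma'(s)$ on $(0,1]$ and on $[1,\infty)$, and since $\tilde p\in(0,1)$ and $\tilde q>1$, the function $\tilde\Psi$ is indeed of class $\ckf$.

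The main obstacle is arranging the exponents so that a single pair $(\tilde p,\tilde q)$ accommodates both the small-$s$ and large-$s$ regimes simultaneously: in the small regime one must dominate $\sigma(s)^{\tilde q}$ by the correct power of $s$ (which forces $\tilde q>1$), while in the large regime one must dominate $\sigma(s)^{\tilde p}$ (which forces $\tilde p<1$). The specific choice above makes both comparisons reduce, via $s\le 1$ or $s\ge 1$, to the monotonicity of $s^r$ in $r$, so once the asymptotic matching exponents are identified the remaining work is routine bookkeeping using Lemma \ref{lem_sandw}.
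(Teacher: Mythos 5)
Your proposal is correct and takes essentially the same approach as the paper: both identify the same matching exponents $\tilde p=\frac{p+r_1-1}{r_1}$ and $\tilde q=\frac{q+r_n-1}{r_n}$ and then shrink the coefficients. The only difference is cosmetic bookkeeping — you split $s\le 1$ versus $s\ge 1$ and use monotonicity of $s\mapsto s^r$ on each half-line, whereas the paper bounds $\tilde\Psi(\sigma(s))$ globally via Lemma~\ref{lem_sandw} and a power-mean inequality on the sum $\sigma(s)$.
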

\begin{proof}
    Let $\Psi(s)=a s^{p}+bs^q$, where $p\in (0,1)$ and $q>1$. It then follows that $\sigma'(s)=\sum_{i=1}^n r_i c_i s^{r_i-1}$, which yields $\Psi(s)\sigma'(s)\ge ms^{p+r_1-1}+ms^{q+r_n-1},$
    where $m=\min(ar_1 c_1, br_n c_n)$. We then postulate $\tilde{\Psi}\in\ckf$ of the form $\tilde{\Psi}(s)=\tilde{m}s^{\tilde{p}}+\tilde{m}s^{\tilde{q}}$, where $\tilde{p}=\frac{p+r_1-1}{r_1}\in (0,1)$ and $\tilde{q}=\frac{q+r_n-1}{r_n}>1$. Indeed, computations yield:
    \begin{align}
        \tilde{\Psi}(\sigma(s))&\le\tilde{m}\left(\sum_{i=1}^n c_i^{\tilde{p}}+n^{\tilde{q}-1}\sum_{j=1}^n c_j^{\tilde{q}}\right)(s^{p+r_1-1}+s^{q+r_n-1}),\notag
    \end{align}
    where the inequality follows from Lemma \ref{lem_sandw} and the fact that $r_i\tilde{p}, r_i \tilde{q}\in [p+r_1-1, q+r_n-1]$ for all $i\in [n]$. We can then pick $\tilde{m}$ sufficiently small to obtain the result.
\end{proof}
\vspace{0.1cm}
\begin{lemma}\label{lem_invscale}
    Suppose $\gamma\in\cki$, and let $\Psi\in\ckf$. Then there exists $\lambda\ge 1$ and $\tilde{\Psi}\in\ckf$ such that $\varrho(s)\ge \tilde{\Psi}(\gamma^\lambda(s))$ for all $s> 0$, where $\varrho(s)=\lambda\gamma^{\lambda-1}(s)\gamma'(s)\Psi(s)$.\QEDB
\end{lemma}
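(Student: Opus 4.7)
The plan is to pass through the inverse $\rho := \gamma^{-1} \in \ckp$, which by hypothesis admits the explicit form $\rho(y) = \sum_{i=1}^n c_i y^{p_i}$ with $c_i > 0$ and, after reindexing, $0 < p_1 \le \dots \le p_n$. By the chain rule, $\varrho(s) = (\gamma^\lambda)'(s)\Psi(s)$, and the substitution $y := \gamma(s)$ together with $\gamma'(s) = 1/\rho'(y)$ converts the target inequality $\varrho(s) \ge \tilde{\Psi}(\gamma^\lambda(s))$ into the equivalent pointwise condition
\begin{equation*}
\frac{\lambda y^{\lambda-1}\Psi(\rho(y))}{\rho'(y)} \;\ge\; \tilde{\Psi}(y^\lambda), \qquad \forall\, y>0.
\end{equation*}
It therefore suffices to exhibit $\tilde{\Psi}(u) = \tilde{m}(u^{\tilde p}+u^{\tilde q}) \in \ckf$ and $\lambda \ge 1$ for which this single inequality holds.

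The next step is to bound the left-hand side below in two regimes. Writing $\Psi(s) = as^p + bs^q$ with $p\in(0,1)$ and $q>1$, the crude monomial estimates $\rho(y)\ge c_1 y^{p_1}$ and $\rho(y)\ge c_n y^{p_n}$ yield $\Psi(\rho(y))\ge a c_1^p y^{p p_1} + b c_n^q y^{q p_n}$. For the denominator, Lemma \ref{lem_sandw} applied term-by-term to $\rho'(y) = \sum c_i p_i y^{p_i-1}$ gives $\rho'(y)\le M(y^{p_1-1}+y^{p_n-1})$ with $M := \sum c_i p_i$, and monotonicity of powers simplifies this to $\rho'(y) \le 2M y^{p_1-1}$ on $(0,1]$ and $\rho'(y) \le 2M y^{p_n-1}$ on $[1,\infty)$. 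Retaining only the dominant numerator term on each regime then produces
\begin{equation*}
\text{LHS} \;\ge\; \frac{\lambda a c_1^p}{2M}\, y^{\lambda+(p-1)p_1} \text{ on } (0,1], \qquad \text{LHS} \;\ge\; \frac{\lambda b c_n^q}{2M}\, y^{\lambda+(q-1)p_n} \text{ on } [1,\infty).
\end{equation*}

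Having isolated the correct asymptotic slopes, I would match exponents by declaring $\tilde p := 1-(1-p)p_1/\lambda$ and $\tilde q := 1+(q-1)p_n/\lambda$, so that $\lambda \tilde p = \lambda + (p-1)p_1$ and $\lambda \tilde q = \lambda + (q-1)p_n$ on the nose. The condition $\tilde q > 1$ is automatic; the condition $\tilde p \in (0,1)$ requires $\lambda > (1-p)p_1$, which is secured by taking $\lambda := \max\{1,\, 2(1-p)p_1\}$. Using $y^{\lambda \tilde q} \le y^{\lambda \tilde p}$ on $(0,1]$ and $y^{\lambda \tilde p}\le y^{\lambda \tilde q}$ on $[1,\infty)$, the bound $\tilde{\Psi}(y^\lambda) \le 2\tilde m \max\{y^{\lambda\tilde p}, y^{\lambda\tilde q}\}$ matches the regime-wise LHS estimates exactly, so the desired inequality follows by choosing $\tilde m := \min\{\lambda a c_1^p,\, \lambda b c_n^q\}/(4M)$.

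The principal obstacle is the mixed-exponent structure of $\rho'(y)$: as a sum of distinct monomials it admits no single-power upper bound valid on all of $(0,\infty)$, which forces the split at $y = 1$ and the careful exponent matching on each side. A related subtlety is that $\tilde p$ would fail to lie in $(0,1)$ if $\lambda$ were too small—for instance when $p_1$ is large and $p$ is near $0$—which is precisely why the statement asserts only the existence of \emph{some} $\lambda \ge 1$ rather than permitting $\lambda = 1$ universally.
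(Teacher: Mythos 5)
Your proof is correct and shares the paper's core idea: split at $\gamma(s)=1$ (equivalently $y=1$), bound the derivative sum by its dominant monomial on each side, and match exponents. But the final assembly step is genuinely different, and the difference is not just cosmetic. The paper works directly in the $s$-variable and, on each of the two regimes, produces a \emph{separate} candidate $\tilde{\Psi}_1,\tilde{\Psi}_n\in\ckf$ (taking $i=1$ when $\gamma(s)<1$ and $i=n$ when $\gamma(s)\ge1$), then invokes Lemma~\ref{lem_fxtbd} to replace $\min(\tilde{\Psi}_1,\tilde{\Psi}_n)$ by a single $\ckf$ minorant. Demanding that \emph{both} $\tilde{\Psi}_1$ and $\tilde{\Psi}_n$ individually lie in $\ckf$ forces the paper to take $\lambda > (1-p)r_n$, where $r_n$ is the \emph{largest} exponent of $\gamma^{-1}$. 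Your substitution $y=\gamma(s)$ is tidier notation, but the substantive improvement is that you assemble a single $\tilde{\Psi}$ whose sub-linear exponent $\tilde p$ comes from the near-origin regime (driven by $p_1$) and whose super-linear exponent $\tilde q$ comes from the at-infinity regime (driven by $p_n$), so neither regime's one-sided estimate is ever required to be a full $\ckf$ function on its own. This sidesteps Lemma~\ref{lem_fxtbd} entirely and yields the milder requirement $\lambda > (1-p)p_1$: only the smallest exponent of $\gamma^{-1}$ matters. In short, same decomposition, but a leaner exponent-matching that removes one auxiliary lemma and relaxes the lower bound on $\lambda$.
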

\begin{proof}
    Denote $\Psi(s)=a s^{p}+b s^{q}$, where $p\in (0,1)$ and $q>1$ and let $\gamma^{-1}(s)=\sum_{i=1}^n c_i s^{r_i}$, where $0<r_1<r_2<...<r_n$ and $c_i>0$ for all $i\in [n]$. 
    Note that $\gamma^{-1}(s)\ge c_i s^{r_i}$, which implies $\gamma(s)\le \tilde{c}_i s^\frac{1}{r_i}$ for all $i\in [n]$ and $s\ge 0$, where $\tilde{c}_i=c_i^{-\frac{1}{r_i}}$. From this, we obtain that $\Psi(s)\ge a c_i^{p}\gamma^{p r_i}(s)+b c_i^{q}\gamma^{q r_i}(s).$ We define the constant $R:=\sum_{i=1}^n r_i c_i$, and we first consider the case of $\gamma(s)\in (0,1)$, which implies $\sum_{i=1}^n r_i c_i \gamma^{r_i-1}(s)\le R\gamma^{r_1-1}(s)$. From the inverse function theorem, we obtain:
    \begin{align*}
        \varrho(s)&=\frac{\lambda\gamma^{\lambda-1}(s)\Psi(s)}{\sum_{i=1}^n r_i c_i \gamma^{r_i-1}(s)}\\
        &\ge \tilde{m}_i\gamma^{\lambda+p r_i-r_1}(s)+\tilde{m}_i\gamma^{\lambda+q r_i-r_1}(s),
    \end{align*}
    where $\tilde{m}_i=\frac{\lambda}{R}\min\left(ac_i^{p}, b c_i^{q}\right)$. Choosing $i=1$ yields $\varrho(s)\ge \tilde{\Psi}_1(\gamma^\lambda(s))$,
where $\tilde{\Psi}_i=\tilde{m}_i s^{1-\frac{(1-p)r_i}{\lambda}}+\tilde{m}_i s^{1+\frac{(q-1)r_i}{\lambda}}$. Picking $\lambda>(1-p)r_1$ guarantees $\tilde{\Psi}_1\in\ckf$. Now, we consider the case of $\gamma(s)\ge 1$, which implies $\sum_{i=1}^n r_i c_i \gamma^{r_i-1}(s)\le R\gamma^{r_n-1}(s)$. We can then perform similar calculations done above to obtain:
\begin{equation*}
    \varrho(s)\ge \tilde{m}_i\gamma^{\lambda+p r_i-r_n}(s)+\tilde{m}_i\gamma^{\lambda+q r_i-r_n}(s),\quad i\in [n].
\end{equation*}
If we choose $i=n$, we obtain $\varrho(s)\ge \tilde{\Psi}_n(\gamma^\lambda(s))$,
where $\tilde{\Psi}_n\in\ckf$ if $\lambda>(1-p)r_n$. Since we assume $r_1<r_n$, it suffices to have $\lambda>(1-p)r_n$. By Lemma \ref{lem_fxtbd}, we know that there exists $\tilde{\Psi}\in\ckf$ such that $\min(\tilde{\Psi}_1(s), \tilde{\Psi}_n(s))\ge \tilde{\Psi}(s)$ for all $s>0$. Putting both cases together, we obtain
\begin{equation*}
    \varrho(s)\ge \min\left(\tilde{\Psi}_1(\gamma^\lambda(s)), \tilde{\Psi}_n(\gamma^\lambda(s))\right)\ge \tilde{\Psi}(\gamma^\lambda(s)),
\end{equation*}
which establishes the result.
\end{proof}
\vspace{0.1cm}
\begin{lemma}\label{lem_k1fxt}
    Suppose $V:\re^n\to\re_+$ is a FxT ISS Lyapunov function for \eqref{sysu}. Then, if $\sigma(s)=\sum_{i=1}^n c_i s^{r_i}$, where $1\le r_1< r_2< ..< r_n$ and $c_i>0$ for $i\in [n]$, it follows that $\tilde{V}(x)=\sigma(V(x))$ is a FxT ISS Lyapunov function of \eqref{sysu}.\QEDB
\end{lemma}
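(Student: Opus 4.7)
The plan is to verify directly that $\tilde V = \sigma\circ V$ satisfies the two requirements in the definition of a FxT ISS Lyapunov function, namely the $\mathcal{K}_\infty$ sandwich bound and the dissipation implication. Most of the analytic heavy lifting is already packaged in Lemma \ref{lem_k1scale}, so the argument reduces to a chain-rule computation together with a rewriting of the triggering condition.

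First I would observe that $\sigma\in\mathcal{K}_\infty$ (it is a sum of strictly increasing powers vanishing at $0$ and diverging at infinity) and that $\sigma$ is $\mathcal{C}^1$ on $[0,\infty)$ because each exponent satisfies $r_i\ge 1$, so the derivative $\sigma'(s)=\sum_{i=1}^n r_i c_i s^{r_i-1}$ is continuous. Hence $\tilde V = \sigma\circ V$ is $\mathcal{C}^1$. Using the existing bounds $\underline{\alpha}_V(|x|)\le V(x)\le \overline{\alpha}_V(|x|)$ with $\underline{\alpha}_V,\overline{\alpha}_V\in\mathcal{K}_\infty$ and the monotonicity of $\sigma$, I get
\begin{equation*}
(\sigma\circ\underline{\alpha}_V)(|x|)\;\le\;\tilde V(x)\;\le\;(\sigma\circ\overline{\alpha}_V)(|x|),
\end{equation*}
and both $\sigma\circ\underline{\alpha}_V$ and $\sigma\circ\overline{\alpha}_V$ are in $\mathcal{K}_\infty$, establishing item~1 of Definition~\ref{def_fxts_lf}.

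Next I would verify the FxT ISS implication \eqref{fxtiss_imp}. Let $\chi\in\mathcal{K}_\infty$ and $\Psi\in\ckf$ be the gain and rate associated with $V$, and define $\tilde\chi = \sigma\circ\chi\in\mathcal{K}_\infty$. Because $\sigma$ is strictly increasing, the triggering condition can be rewritten as
\begin{equation*}
\tilde V(x)\ge \tilde\chi(|u|) \;\Longleftrightarrow\; \sigma(V(x))\ge \sigma(\chi(|u|)) \;\Longleftrightarrow\; V(x)\ge \chi(|u|).
\end{equation*}
Under this condition, the chain rule and the hypothesis on $V$ yield
\begin{equation*}
\nabla\tilde V(x)^\top f(x,u)=\sigma'(V(x))\,\nabla V(x)^\top f(x,u)\le -\sigma'(V(x))\,\Psi(V(x)).
\end{equation*}
Applying Lemma~\ref{lem_k1scale} with $s=V(x)$ produces a $\tilde\Psi\in\ckf$ such that $\sigma'(V(x))\Psi(V(x))\ge \tilde\Psi(\sigma(V(x)))=\tilde\Psi(\tilde V(x))$ whenever $V(x)>0$; at $V(x)=0$ both sides vanish and the inequality is trivial. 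Combining, I obtain
\begin{equation*}
\tilde V(x)\ge \tilde\chi(|u|) \;\Longrightarrow\; \nabla\tilde V(x)^\top f(x,u)\le -\tilde\Psi(\tilde V(x)),
\end{equation*}
which is exactly the FxT ISS Lyapunov implication for $\tilde V$.

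The only nontrivial step is showing that the effective rate $\sigma'(V)\Psi(V)$ can be bounded below by a class-$\ckf$ function of $\sigma(V)$; since this is precisely the content of Lemma~\ref{lem_k1scale}, there is no real obstacle beyond invoking that lemma. Everything else is composition of $\mathcal{K}_\infty$ functions and the standard chain rule, so the main conceptual content is already encapsulated in the auxiliary lemma.
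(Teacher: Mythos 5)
Your proof is correct and follows essentially the same route as the paper's: a chain-rule computation, the $\sigma$-composition argument for the sandwich bound and the triggering condition, and an appeal to Lemma~\ref{lem_k1scale} to convert $\sigma'(V)\Psi(V)$ into a $\ckf$ bound in terms of $\tilde V$. The small extra checks you include (that $\sigma$ is $\mathcal{C}^1$ for $r_i\ge 1$ and the trivial $V(x)=0$ case) are fine but do not change the argument.
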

\begin{proof}
    First, it is easy to verify that $\sigma(\underline{\alpha}(|x|))\le \tilde{V}(x)\le \sigma(\overline{\alpha}(|x|)).$ Let $\tilde{\chi}=\sigma\circ \chi$ and suppose $\tilde{V}(x)\ge \tilde{\chi}(|u|)$, which implies $\nabla V(x)^\top f(x,u)\le -\Psi(V(x))$. Then we obtain:
    \begin{align*}
        \nabla\tilde{V}(x)^\top f(x,u)&= \sigma'(V(x)) \nabla V(x)^\top f(x,u)\\
        &\le -\Psi(V(x))\sigma'(V(x)),
    \end{align*}
    where $\Psi\in\ckf$ and $\sigma'$ denotes the derivative of $\sigma$. By Lemma \ref{lem_k1scale}, we can find some $\tilde{\Psi}\in\ckf$ such that $\tilde{\Psi}(\sigma(s))\le \Psi(s)\sigma'(s)$ for all $s>0$. Then we obtain
    \begin{equation*}
        \nabla\tilde{V}(x)^\top f(x,u)\le -\tilde{\Psi}(\sigma(V(x)))=-\tilde{\Psi}(\tilde{V}(x)),
    \end{equation*}
    as desired.
\end{proof}
\vspace{0.1cm}
\begin{lemma}\label{lem_invfxt}
     Suppose $V:\re^n\to\re_+$ is a FxT ISS Lyapunov function for \eqref{sysu}, and let $\gamma\in\cki$. Then, there exists $\lambda\ge 1$ such that
     $\tilde{V}(x)=\gamma^\lambda(V(x))$
     is a FxT ISS Lyapunov function for \eqref{sysu}.\QEDB
\end{lemma}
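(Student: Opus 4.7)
My plan is to mirror the structure of the proof of Lemma \ref{lem_k1fxt}, with the scaling lemma \ref{lem_invscale} playing the role that Lemma \ref{lem_k1scale} played there. Since $\gamma\in\cki\subset\ck_\infty$ is strictly increasing with $\gamma(0)=0$, the composition $\tilde{V}(x)=\gamma^\lambda(V(x))=[\gamma(V(x))]^\lambda$ is a well-defined, nonnegative function for any $\lambda\ge 1$. My first step would be to verify item 1 of Definition \ref{def_fxts_lf}: applying $\gamma^\lambda(\cdot)$ (a $\ck_\infty$ function) to the sandwich bounds $\underline{\alpha}(|x|)\le V(x)\le \overline{\alpha}(|x|)$ of the original $V$ yields $\gamma^\lambda(\underline{\alpha}(|x|))\le \tilde{V}(x)\le \gamma^\lambda(\overline{\alpha}(|x|))$, with both envelopes in $\ck_\infty$ since compositions of $\ck_\infty$ functions remain in $\ck_\infty$.

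Next, I would construct the new ISS gain by setting $\tilde{\chi}(s):=\gamma^\lambda(\chi(s))\in\ck_\infty$. Since $\gamma^\lambda$ is strictly increasing, the implication $\tilde{V}(x)\ge \tilde{\chi}(|u|)$ is equivalent to $V(x)\ge \chi(|u|)$, which by hypothesis gives $\nabla V(x)^\top f(x,u)\le -\Psi(V(x))$ for some $\Psi\in\ckf$. Differentiating the composition (using that $\gamma$ is differentiable on $(0,\infty)$ thanks to $\gamma^{-1}\in\ckp$ and the inverse function theorem), the chain rule yields
\begin{equation*}
\nabla \tilde{V}(x)^\top f(x,u)=\lambda\gamma^{\lambda-1}(V(x))\,\gamma'(V(x))\,\nabla V(x)^\top f(x,u),
\end{equation*}
and substituting the ISS bound on $\nabla V(x)^\top f(x,u)$ shows
\begin{equation*}
\nabla \tilde{V}(x)^\top f(x,u)\le -\varrho(V(x)),
\end{equation*}
where $\varrho(s):=\lambda\gamma^{\lambda-1}(s)\gamma'(s)\Psi(s)$ is exactly the expression appearing in Lemma \ref{lem_invscale}.

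Now I would invoke Lemma \ref{lem_invscale} directly: it guarantees the existence of some $\lambda\ge 1$ (chosen large enough relative to the powers appearing in $\Psi$ and in the expansion of $\gamma^{-1}$) and some $\tilde{\Psi}\in\ckf$ such that $\varrho(s)\ge\tilde{\Psi}(\gamma^\lambda(s))$ for all $s>0$. Fixing such $\lambda$ and $\tilde{\Psi}$, the previous bound becomes $\nabla \tilde{V}(x)^\top f(x,u)\le -\tilde{\Psi}(\gamma^\lambda(V(x)))=-\tilde{\Psi}(\tilde{V}(x))$, which is precisely the FxT ISS Lyapunov decrease condition \eqref{fxtiss_imp}. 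Together with the sandwich bounds and the new gain $\tilde{\chi}$, this establishes that $\tilde{V}$ is a FxT ISS Lyapunov function for \eqref{sysu}.

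The only real technical content is choosing $\lambda$ large enough to absorb the mismatched exponents introduced by $\gamma^{\lambda-1}\gamma'$ while keeping the final rate in $\ckf$, and this is exactly the obstacle that Lemma \ref{lem_invscale} is designed to resolve; once that lemma is in hand, the present result reduces to bookkeeping via the chain rule. A minor subtlety worth addressing in the write-up is the regularity of $\tilde{V}$ at the origin when $\gamma'(0)$ is not defined (which happens when $\gamma^{-1}$ has all exponents strictly greater than one); this is harmless because the implication \eqref{fxtiss_imp} need only be verified pointwise in $x$, and away from the origin $\gamma^\lambda\circ V$ inherits $\mathcal{C}^1$ regularity from $V$, while at $x=0$ both sides of the decrease inequality vanish.
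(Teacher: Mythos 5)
Your proof follows the same overall route as the paper: verify the sandwich bounds, set $\tilde{\chi}=\gamma^\lambda\circ\chi$, apply the chain rule to produce $\varrho(s)=\lambda\gamma^{\lambda-1}(s)\gamma'(s)\Psi(s)$, and invoke Lemma~\ref{lem_invscale} to absorb the scaling into a $\ckf$ rate. That part is correct and matches the paper.

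The one genuine gap is in your treatment of the $\mathcal{C}^1$ regularity of $\tilde{V}$ at the origin. You correctly identify that this is the delicate case (when all exponents $r_i$ of $\gamma^{-1}$ exceed one, $\gamma'(0^+)=\infty$), but the argument you offer --- ``the implication need only be verified pointwise'' and ``both sides of the decrease inequality vanish at $x=0$'' --- does not establish differentiability. The definition of a FxT ISS Lyapunov function requires $\tilde{V}$ to be $\mathcal{C}^1$ globally; this is a property of the function itself, independent of whether the decrease inequality happens to be trivially satisfied at a point, and it is a prerequisite for $\nabla\tilde{V}$ to even be defined there. Since $V(x)=0$ exactly at $x=0$ (by the lower sandwich bound $\underline{\alpha}(|x|)\le V(x)$ with $\underline{\alpha}\in\ck_\infty$), the origin is precisely where the chain rule can break down. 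The paper resolves this by exploiting the same freedom in $\lambda$ that Lemma~\ref{lem_invscale} already uses: from $\gamma(s)\le \tilde{c}_1 s^{1/r_1}$ one gets
\begin{equation*}
0\le \limsup_{h\to 0^+}\frac{\gamma^\lambda(h)}{h}\le \limsup_{h\to 0^+}\tilde{c}_1^\lambda h^{\lambda/r_1-1}=0 \quad\text{for } \lambda>r_1,
\end{equation*}
so $(\gamma^\lambda)'(0)=0$ and $\gamma^\lambda$ is $\mathcal{C}^1$ on $[0,\infty)$. Thus one should choose $\lambda$ large enough to satisfy both $\lambda>r_1$ and the lower bound coming from Lemma~\ref{lem_invscale} (namely $\lambda>(1-p)r_n$); with that, the rest of your argument goes through unchanged.
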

\begin{proof}
    Let $\gamma^{-1}(s)=\sum_{i=1}^n c_i s^{r_i}$, where $0<r_1<r_2<...<r_n$ and $c_i>0$ for all $i\in [n]$. Since $\gamma^{-1}$ is $\cc^1$ with a nonzero derivative on $(0, \infty)$, it follows from the inverse function theorem that $\gamma^\lambda$ is also $\cc^1$ on $(0, \infty)$, and its derivative is given by 
    \begin{equation*}
        (\gamma^\lambda)'(s)=\frac{\lambda\gamma^{\lambda-1}(s)}{(\gamma^{-1})'(\gamma(s))}=\frac{\lambda\gamma^{\lambda-1}(s)}{\sum_{i=1}^n r_i c_i \gamma^{r_i-1}(s)},\quad s>0.
    \end{equation*}
    Moreover, note that $\gamma(s)\le \tilde{c}_i s^\frac{1}{r_i}$ for all $i\in[n]$ and $s\ge 0$, which implies
    \begin{align*}
        0\le\lim_{h\to 0^+}\frac{\gamma^\lambda(h)-\gamma^\lambda(0)}{h}\le
        \lim_{h\to 0^+}\frac{\tilde{c}^\lambda_i h^\frac{\lambda}{r_i}}{h}.
    \end{align*}
    By choosing $i=1$, we observe that $\lambda>r_1$ results in $(\gamma^\lambda)'(0)=0$, which implies that $\gamma^\lambda$ is $\mathcal{C}^1$ on $[0, \infty)$. 
    Suppose $\tilde{V}(x)\ge \tilde{\chi}(|u|)$, where $\tilde{\chi}=\gamma^\lambda\circ\chi$. This implies $V(x)\ge \chi(|u|)$, and thus $\nabla V(x)^\top f(x,u)\le -\Psi(V(x))$ for some $\Psi\in\ckf$. Similar computations as done above yield:
    \begin{align*}
        \nabla\tilde{V}(x)^\top f(x,u)
        &\le -\varrho(V(x)),
    \end{align*}
    where 
    \begin{equation*}
        \varrho(s)=(\gamma^\lambda)'(s)\Psi(s)=\frac{\lambda\gamma^{\lambda-1}(s)\Psi(s)}{\sum_{i=1}^n r_i c_i \gamma^{r_i-1}(s)}.
    \end{equation*}
    We would like to find $\tilde{\Psi}\in\ckf$ such that $\varrho(s)\ge \tilde{\Psi}(\gamma^\lambda(s))$ for all $s>0$. By Lemma \ref{lem_invscale}, we can pick $\lambda$ sufficiently large such that this is possible. Thus, we have
    \begin{equation*}
        \nabla\tilde{V}(x)^\top f(x,u)\le -\tilde{\Psi}(\gamma^\lambda(V(x(t))))=-\tilde{\Psi}(\tilde{V}(x)),
    \end{equation*}
    as desired.
\end{proof}
\end{document}